\def\BibTeX{{\rm B\kern-.05em{\sc i\kern-.025em b}\kern-.08em
    T\kern-.1667em\lower.7ex\hbox{E}\kern-.125emX}}
\newcommand\plotwidth{0.31}
\newcommand{\lvec}[1]{\ensuremath{\mathbf{#1}}}  		
\newcommand{\gvec}[1]{\ensuremath{\boldsymbol{#1}}}		
\newcommand{\lmat}[1]{\ensuremath{\mathbf{#1}}}  		
\newcommand{\gmat}[1]{\ensuremath{\boldsymbol{#1}}}		
\newcommand{\transymb}{\ensuremath{\mathsf{T}}}
\newcommand{\hermsymb}{\ensuremath{\mathsf{H}}}
\newcommand{\tran}{\ensuremath{^{\mkern-1mu\transymb}}}
\newcommand{\herm}{\ensuremath{^{\mkern-1mu\hermsymb}}}
\newcommand{\trace}[1]{\ensuremath{\mathrm{tr}\!\left\{#1\right\}}}
\newcommand{\er}{\mbox{\ensuremath{\mathrm{e}}}}
\newcommand*{\argmin}{\ensuremath{\mathop{\mathrm{arg\,min}}}}
\newcommand*{\argmax}{\ensuremath{\mathop{\mathrm{arg\,max}}}}
\newcommand{\est}[2]{\ensuremath{E_{#1}\!\left\{#2\right\}}}
\newcommand{\Cset}{\mathbb{C}}
\newcommand{\CN}[1]{\ensuremath{\mathcal{CN}\!\left(#1\right)}}
\newcommand{\cat}[1]{\ensuremath{\pi\!\left(#1\right)}}
\newcommand{\BG}[1]{\ensuremath{\mathcal{BG}\!\left(#1\right)}}
\newcommand{\diag}[1]{\ensuremath{\mathrm{diag}\!\left\{#1\right\}}}
\newcommand{\vect}[1]{\ensuremath{\mathrm{vec}\!\left\{#1\right\}}}
\newcommand{\ind}[1]{\ensuremath{\mathbbm{1}_{#1}}}
\newcommand{\pilot}[1]{\ensuremath{#1^{p}}}
\newcommand{\data}[1]{\ensuremath{#1^{d}}}
\newcommand{\msg}[2]{\ensuremath{m_{#1;#2}}}
\newcommand{\msgp}[2]{\ensuremath{p_{#1;#2}}}
\newcommand{\mmd}[2]{\ensuremath{q_{#1;#2}}}
\newcommand{\catmsg}[2]{\ensuremath{\pi_{#1;#2}}}
\newcommand{\Actmsg}[2]{\ensuremath{\lambda_{#1;#2}}}
\newcommand{\Mumsg}[2]{\ensuremath{\gvec{\mu}_{#1;#2}}}
\newcommand{\Cmsg}[2]{\ensuremath{\lvec{C}_{#1;#2}}}
\newcommand{\Kappamsg}[2]{\ensuremath{\kappa_{#1;#2}}}
\newcommand{\Gammamsg}[2]{\ensuremath{\gvec{\gamma}_{#1;#2}}}
\newcommand{\Lambdamsg}[2]{\ensuremath{\gvec{\Lambda}_{#1;#2}}}
\newcommand{\Actmsga}[2]{\ifthenelse{\isempty{#2}}{\ensuremath{\check{\lambda}_{#1}}}{\ensuremath{\check{\lambda}_{#1;#2}}}}
\newcommand{\Mumsga}[2]{\ifthenelse{\isempty{#2}}{\ensuremath{\check{\gvec{\mu}}_{#1}}}{\ensuremath{\check{\gvec{\mu}}_{#1;#2}}}}
\newcommand{\Cmsga}[2]{\ifthenelse{\isempty{#2}}{{\ensuremath{\check{\lvec{C}}_{#1}}}}{\ensuremath{\check{\lvec{C}}_{#1;#2}}}}
\newcommand{\Kappamsga}[2]{\ifthenelse{\isempty{#2}}{\ensuremath{\check{\kappa}_{#1}}}{\ensuremath{\check{\kappa}_{#1;#2}}}}
\newcommand{\Gammamsga}[2]{\ifthenelse{\isempty{#2}}{\ensuremath{\check{\gvec{\gamma}}_{#1}}}{\ensuremath{\check{\gvec{\gamma}}_{#1;#2}}}}
\newcommand{\Lambdamsga}[2]{\ifthenelse{\isempty{#2}}{\ensuremath{\check{\gvec{\Lambda}}_{#1}}}{\ensuremath{\check{\gvec{\Lambda}}_{#1;#2}}}}
\newcommand{\Actmsgb}[2]{\ifthenelse{\isempty{#2}}{\ensuremath{\hat{\lambda}_{#1}}}{\ensuremath{\hat{\lambda}_{#1;#2}}}}
\newcommand{\Mumsgb}[2]{\ifthenelse{\isempty{#2}}{\ensuremath{\hat{\gvec{\mu}}_{#1}}}{\ensuremath{\hat{\gvec{\mu}}_{#1;#2}}}}
\newcommand{\Cmsgb}[2]{\ifthenelse{\isempty{#2}}{{\ensuremath{\hat{\lvec{C}}_{#1}}}}{\ensuremath{\hat{\lvec{C}}_{#1;#2}}}}
\newcommand{\Kappamsgb}[2]{\ifthenelse{\isempty{#2}}{\ensuremath{\hat{\kappa}_{#1}}}{\ensuremath{\hat{\kappa}_{#1;#2}}}}
\newcommand{\Gammamsgb}[2]{\ifthenelse{\isempty{#2}}{\ensuremath{\hat{\gvec{\gamma}}_{#1}}}{\ensuremath{\hat{\gvec{\gamma}}_{#1;#2}}}}
\newcommand{\Lambdamsgb}[2]{\ifthenelse{\isempty{#2}}{\ensuremath{\hat{\gvec{\Lambda}}_{#1}}}{\ensuremath{\hat{\gvec{\Lambda}}_{#1;#2}}}}
\newcommand{\mumsg}[2]{\ensuremath{\mu_{#1;#2}}}
\newtheorem{proposition}{Proposition}
\newtheorem{corollary}{Corollary}
\begin{document}
\bstctlcite{IEEEexample:BSTcontrol}

\title{Expectation Propagation for Distributed Inference in Grant-Free Cell-Free Massive MIMO%
\thanks{This work was funded by the Deutsche Forschungsgemeinschaft (DFG, German Research Foundation) – Project CO 1311/1-1,  Project ID 491320625.
An earlier version of this paper was presented in part at the IEEE International Workshop on Signal Processing Advances in Wireless Communications (SPAWC), 2024~\cite{Forsch2024}.
}
\thanks{Christian Forsch and Laura Cottatellucci are with the Institute for Digital Communications, Friedrich-Alexander-Universität Erlangen-Nürnberg, Erlangen, Germany (e-mail: christian.forsch@fau.de; laura.cottatellucci@fau.de).}
}

\author{Christian~Forsch,~\IEEEmembership{Graduate Student Member,~IEEE,} and~Laura~Cottatellucci,~\IEEEmembership{Member,~IEEE}}

\maketitle

\begin{abstract}
\Ac{GF-CF-MaMIMO} systems are anticipated to be a key enabling technology for next-generation \ac{IoT} networks, as they support massive connectivity without explicit scheduling.
However, the large amount of connected devices prevents the use of orthogonal pilot sequences, resulting in severe \ac{PC} that degrades channel estimation and data detection performance.
Furthermore, scalable \ac{GF-CF-MaMIMO} networks inherently rely on distributed signal processing.
In this work, we consider the uplink of a \ac{GF-CF-MaMIMO} system and propose two novel \emph{distributed} algorithms for \emph{\ac{JACD}} based on \ac{EP}.
The first algorithm, denoted as \acs{JACD-EP}\acused{JACD-EP}, uses Gaussian approximations for the channel variables, whereas the second, referred to as \acs{JACD-EP-BG}\acused{JACD-EP-BG}, models them as \ac{BG} random variables.
To integrate the \ac{BG} distribution into the \ac{EP} framework, we derive its exponential family representation and develop the two algorithms as efficient message passing over a factor graph constructed from the \ac{APP} distribution.
The proposed framework is inherently scalable with respect to both the number of \acp{AP} and \acp{UE}.
Simulation results show the efficient mitigation of \ac{PC} by the proposed distributed algorithms and their superior detection accuracy compared to (genie-aided) centralized linear detectors.
\end{abstract}

\begin{IEEEkeywords}
Expectation propagation, Bernoulli-Gaussian distribution, distributed inference, activity detection, channel estimation, data detection, grant-free cell-free massive MIMO.
\end{IEEEkeywords}

\acresetall

\vspace*{-3mm}
\section{Introduction}\label{sec:intro}
\IEEEPARstart{T}{he} explosive growth of \ac{IoT} devices and the evolving requirements of massive and critical \ac{MTC} for ultra-low latency, high reliability, and energy efficiency call for new communication system designs~\cite{Mahmood2021}.
Traditional grant-based multiple access schemes fail to meet these high demands due to signaling overhead and latency, especially under dense device deployments and sporadic burst traffic.
Hence, grant-free random access has emerged as a compelling alternative, facilitating efficient resource utilization without the need for an explicit scheduling grant~\cite{Liu2018b,Shahab2020,Gao2024}.
Simultaneously, the \ac{CF-MaMIMO}\acused{MaMIMO}\acused{MIMO} network architecture, in which a large number of geographically distributed \acp{AP} jointly serve a potentially large number of \acp{UE}, enables ubiquitous coverage and energy-efficient communication~\cite{Ngo2017,Ngo2018,Mohammadi2024}.
Grant-free \ac{CF-MaMIMO} (\acs{GF-CF-MaMIMO})\acused{GF-CF-MaMIMO} systems combine the advantages of grant-free random access and \ac{CF-MaMIMO} and are promising candidates for next-generation \ac{IoT} networks~\cite{Wang2021,Ganesan2021}.
At the same time, \ac{CF-MaMIMO} architectures implemented with centralized processing face fundamental scalability limitations~\cite{Xu2025} which are further exacerbated in \ac{GF-CF-MaMIMO} systems, where a large number of sporadically active \acp{UE} must be jointly processed, motivating the design of distributed algorithms.

In addition to the classical communication tasks of channel estimation and data detection, grant-free random access requires identifying the set of active \acp{UE} transmitting data.
To accomplish this, \acp{UE} transmit pilot sequences which can be used to jointly estimate both device activities and channels, a process known as \ac{JAC}, e.g.,~\cite{Liu2018a,Ke2020}.
However, the large number of connected devices in massive \ac{MTC} prevents the use of orthogonal pilot sequences, leading to \emph{\ac{PC}} which degrades overall system performance and further complicates activity and data detection.
\Ac{PC} occurs in any communication system where \acp{UE} transmit non-orthogonal pilot sequences.
In centralized \ac{MaMIMO} systems, \emph{channel hardening} and \emph{favorable propagation} have been used to alleviate the \ac{PC} problem for grant-based multiple access in~\cite{Ngo2012,Yin2013,Cottatellucci2013,Mueller2014,Yin2016}.
In contrast, such properties typically do not hold in \ac{CF-MaMIMO}~\cite{Yin2014,Chen2018,Gholami2020a,Gholami2020b}, rendering such decontamination methods ineffective.
The use of centralized \ac{MMSE} processing in grant-based centralized \ac{MaMIMO} was shown to mitigate \ac{PC} under practical channel statistics in the asymptotic regime of infinitely many antennas~\cite{Bjoernson2018,Bjoernson2020}.
Similar results were obtained for distributed processing in~\ac{CF-MaMIMO}~\cite{Polegre2021}.
However, \ac{PC} still remains a major practical challenge~\cite{Xu2025}, especially in scalable \ac{CF-MaMIMO} systems, which inherently rely on distributed signal processing, with a limited number of \acp{AP} and antennas per \ac{AP}.
This problem becomes even more severe in \ac{GF-CF-MaMIMO} where the additional need for \ac{UE} activity detection arises and a much larger number of \acp{UE} are simultaneously processed, thereby intensifying \ac{PC}.
These challenges motivate the development of distributed pilot decontamination schemes that can effectively mitigate \ac{PC} while jointly supporting scalability and low fronthaul traffic in \ac{GF-CF-MaMIMO} architectures.

One promising approach to mitigate \ac{PC} is to exploit not only the pilot symbols for channel estimation and \ac{UE} activity detection but also the received data symbols.
In grant-based multiple access, where the \ac{UE} activities are known beforehand and only channels and data symbols need to be estimated, this approach is referred to as \ac{JCD}.
Several studies have explored \ac{JCD} in \ac{CF-MaMIMO} systems.
The authors in~\cite{Gholami2021a} investigated semi-blind methods for \ac{JCD} in \ac{CF-MaMIMO} and derived conditions for semi-blind identifiability.
In~\cite{Song2022}, a \ac{JCD} scheme based on forward-backward splitting was developed, exploiting the sparsity of \ac{CF-MaMIMO} channels and employing non-orthogonal pilot sequences.
A distributed \ac{EP}-based semi-blind \ac{JCD} algorithm for \ac{CF-MaMIMO} systems was presented in~\cite{Karataev2024} and was further refined and analyzed under \ac{PC} in~\cite{Forsch2025}.
The authors in~\cite{Zhao2024} combined variational Bayes and \ac{EP} to develop a semi-blind \ac{JCD} algorithm based on Bethe free energy optimization.

In grant-free random access, the received data symbols can also be used to additionally enhance the \ac{UE} activity detection, a task referred to as \emph{\ac{JACD}}.
In the literature, different \ac{JACD} schemes have been proposed.
One class of algorithms is based on sequences spreading data whereby the data symbols of each \ac{UE} are multiplied by a unique spreading signature, e.g.,~\cite{Jiang2020,Zhang2020,Zhang2025}.
These unique, generally non-orthogonal signatures spread the transmitted symbols in the time/frequency domain, enabling multiple \acp{UE} to share the same resources.
However, spreading data reduces the spectral efficiency and limits the achievable data rate per \ac{UE}.
Hence, in the following, we focus on \ac{JACD} schemes that do not rely on spreading data sequences.
For centralized \ac{MaMIMO} systems, several non-spread \ac{JACD} schemes based on \ac{BiG-AMP} have been proposed.
These include approaches that combine \ac{BiG-AMP} with loopy \ac{BP}~\cite{Zou2020}, introduce vector nodes with correlated channels~\cite{Zhang2023}, or iteratively exchange information with the channel decoder~\cite{Bian2023}.
In~\cite{Iimori2021}, bilinear Gaussian \ac{BP} (\acs{BiGaBP}\acused{BiGaBP}) was applied to the \ac{JACD} problem in \ac{GF-CF-MaMIMO} networks, employing low-coherence pilot sequences.
Here, the beliefs of the \ac{BG}-distributed channels and the categorically distributed data symbols are approximated by Gaussian distributions, and a corresponding message-passing algorithm was derived.
The authors in \cite{Sun2025} developed two \ac{JACD} algorithms based on forward-backward splitting and deep unfolding for hyperparameter optimization.
The proposed centralized algorithms employ Laplace distributions to model the sparsity of channels and data.

In this paper, we propose two novel distributed \ac{JACD} algorithms for \ac{GF-CF-MaMIMO} networks.
We formulate the \ac{JACD} task as a \ac{MAP} estimation and detection problem and, then, solve it approximately using \ac{EP}, a Bayesian learning technique that iteratively computes tractable approximations of factorized probability distributions employing exponential family distributions~\cite{Minka2001a,Minka2001b}.
To this end, we factorize the \ac{APP} distribution of user activities, channels, and data symbols to enable a tractable joint inference on a factor graph based on \ac{EP}.
This factor graph approach enables an inherent distributed implementation of the proposed algorithms which is suitable for \emph{decentralized} and \emph{scalable} \ac{CF-MaMIMO} systems with baseband-processing capabilities at the \acp{AP}.
Part of this work was presented in~\cite{Forsch2024}.
The main novelty of the present paper compared to~\cite{Forsch2024} lies in the adoption of \ac{BG} distributions within the \ac{EP} framework to more accurately capture the sparsity of the effective \ac{UE} channels, thereby significantly enhancing the \ac{JACD} performance.
In contrast, the approach in~\cite{Forsch2024} uses Gaussian approximations for the effective \ac{UE} channels.
To enable \ac{EP} inference with \ac{BG} random variables, we first derive the exponential family representation of the \ac{BG} distribution and then apply \ac{EP} message passing to develop the proposed algorithms.
Finally, we conduct an extensive performance analysis via Monte Carlo simulations and compare the proposed algorithms with optimal linear and state-of-the-art \ac{EP}-based algorithms.
The contributions of this paper are summarized as follows:
\begin{itemize}
    \item We formulate the \ac{JACD} problem as a \ac{MAP} estimation and detection problem and propose a factor graph that enables \ac{EP}-based inference of the approximate \ac{APP} distributions of \ac{UE} activities, channels, and data symbols.
    The proposed formulation yields a natural task partition between \acp{AP} and \ac{CPU}, leading to a distributed algorithm that keeps the computational load at the \ac{CPU} linear in the number of data signals, supports scalability, and facilitates a structured and efficient design of the information exchanged over the fronthaul.
    \item We present the exponential family representation of the \ac{BG} distribution, enabling simple and closed-form multiplication and division of \ac{BG} distributions and their seamless integration into the \ac{EP} framework.
    The \ac{BG} model provides a more effective representation of sparse random variables such as channels between \acp{UE} with unknown activity state and \acp{AP} compared to Gaussian approximations.
    \item We develop two distributed \ac{JACD} algorithms by applying \ac{EP} on factor graphs with Gaussian and \ac{BG} channel models, yielding the proposed \acs{JACD-EP}\acused{JACD-EP} and \acs{JACD-EP-BG}\acused{JACD-EP-BG} algorithm, respectively.
    Both the algorithms exhibit a polynomial computational complexity and enable scalable signal processing in \ac{GF-CF-MaMIMO} systems.
    \item We present an extended numerical analysis of the proposed algorithms' performance for different pilot sequence lengths, modeling different levels of \ac{PC}.
    As benchmark, we adopt centralized linear \ac{MMSE} processing, including genie-aided variants, and show the superior performance of the proposed distributed algorithms, particularly under severe \ac{PC}.
    Since centralized \ac{MMSE} processing provides an upper bound on the performance of distributed linear \ac{MMSE} schemes, no distributed linear \ac{MMSE} benchmark is included.
    The results show that linear \ac{MMSE} processing is not sufficient to effectively combat \ac{PC}, highlighting the need for nonlinear estimation and detection schemes.
\end{itemize}

The remainder of this paper is organized as follows.
In Section~\ref{sec:sys_mod} and~\ref{sec:problem}, we introduce the \ac{GF-CF-MaMIMO} system model and formulate the inference problem, respectively.
In Section~\ref{sec:BG}, we derive the exponential family representation of the \ac{BG} distribution.
Then, in Section~\ref{sec:JACD-EP-BG}, we propose the \ac{JACD-EP} and \ac{JACD-EP-BG} algorithms.
In Section~\ref{sec:sims}, the performance of the proposed algorithms is evaluated via Monte Carlo simulations.
Finally, conclusions are drawn in Section~\ref{sec:concl}.

\textit{Notation:}
Lower case, bold lower case, and bold upper case letters, e.g., $x,\lvec{x},\lmat{X}$, represent scalars, vectors, and matrices, respectively.
$\lmat{I}_N$ is the $N$-dimensional identity matrix.
$\diag{\cdot}$ denotes a diagonal matrix whose main diagonal entries are given by the elements inside the brackets.
$\delta(\cdot)$ represents the Dirac delta function.
The indicator function $\ind{(\cdot)}$ equals one if the condition in the subscript is satisfied and zero otherwise.
$(\cdot)\tran$ and $(\cdot)\herm$ denote the transpose and conjugate transpose (Hermitian) operation, respectively.
The trace of a matrix $\lmat{X}$ is written as $\trace{\lmat{X}}$.
$|\mathcal{S}|$ stands for the cardinality of the set $\mathcal{S}$.
$\est{}{\cdot}$ denotes the expectation operator.
$\CN{\lvec{x}|\gvec{\mu},\lmat{C}}=|\pi\lmat{C}|^{-1} \er^{-(\lvec{x}-\gvec{\mu})\herm\lmat{C}^{-1}(\lvec{x}-\gvec{\mu})}$ represents the \ac{PDF} of a proper complex-valued Gaussian random vector $\lvec{x}$ with mean $\gvec{\mu}$ and covariance matrix $\lmat{C}$.
$\cat{x}$ denotes the \ac{PMF} of a categorical random variable $x$.
The notation $x\sim p$ indicates that the random variable $x$ follows the distribution $p$.
In a factor graph, the message sent from the factor node $\Psi_\alpha$ to the variable node $\lvec{x}_\beta$ is denoted as $\msg{\Psi_\alpha}{\lvec{x}_\beta}$ and consists of parameters of the distribution $\msgp{\Psi_\alpha}{\lvec{x}_\beta}(\lvec{x}_\beta)$ in the exponential family.
Note that we adopt the same subscript also for the parameters, e.g., $\msgp{\Psi_\alpha}{\lvec{x}_\beta}(\lvec{x}_\beta)=\mathcal{CN}\big(\lvec{x}_\beta|\Mumsg{\Psi_\alpha}{\lvec{x}_\beta},\Cmsg{\Psi_\alpha}{\lvec{x}_\beta})$ with mean $\Mumsg{\Psi_\alpha}{\lvec{x}_\beta}$ and covariance matrix $\Cmsg{\Psi_\alpha}{\lvec{x}_\beta}$ or $\msgp{\Psi_\alpha}{\lvec{x}_\beta}(\lvec{x}_\beta)=\catmsg{\Psi_\alpha}{\lvec{x}_\beta}(\lvec{x}_\beta)$ with probability values $\catmsg{\Psi_\alpha}{\lvec{x}_\beta}(\lvec{x}_\beta\big)$ for a Gaussian or categorical random variable $\lvec{x}_\beta$, respectively.
Analogous notation holds for variable-to-factor messages $\msg{\lvec{x}_\beta}{\Psi_\alpha}$.

\vspace*{-1mm}
\section{System Model}\label{sec:sys_mod}
\vspace*{-1mm}
We consider the uplink of a \ac{GF-CF-MaMIMO} network comprising $L$ geographically distributed \acp{AP}, each equipped with $N$ antennas, serving $K$ synchronized single-antenna \acp{UE} as illustrated in Fig.~\ref{fig:system_model}.
\begin{figure}[t]
    \centerline{\includegraphics[width=0.22\textwidth]{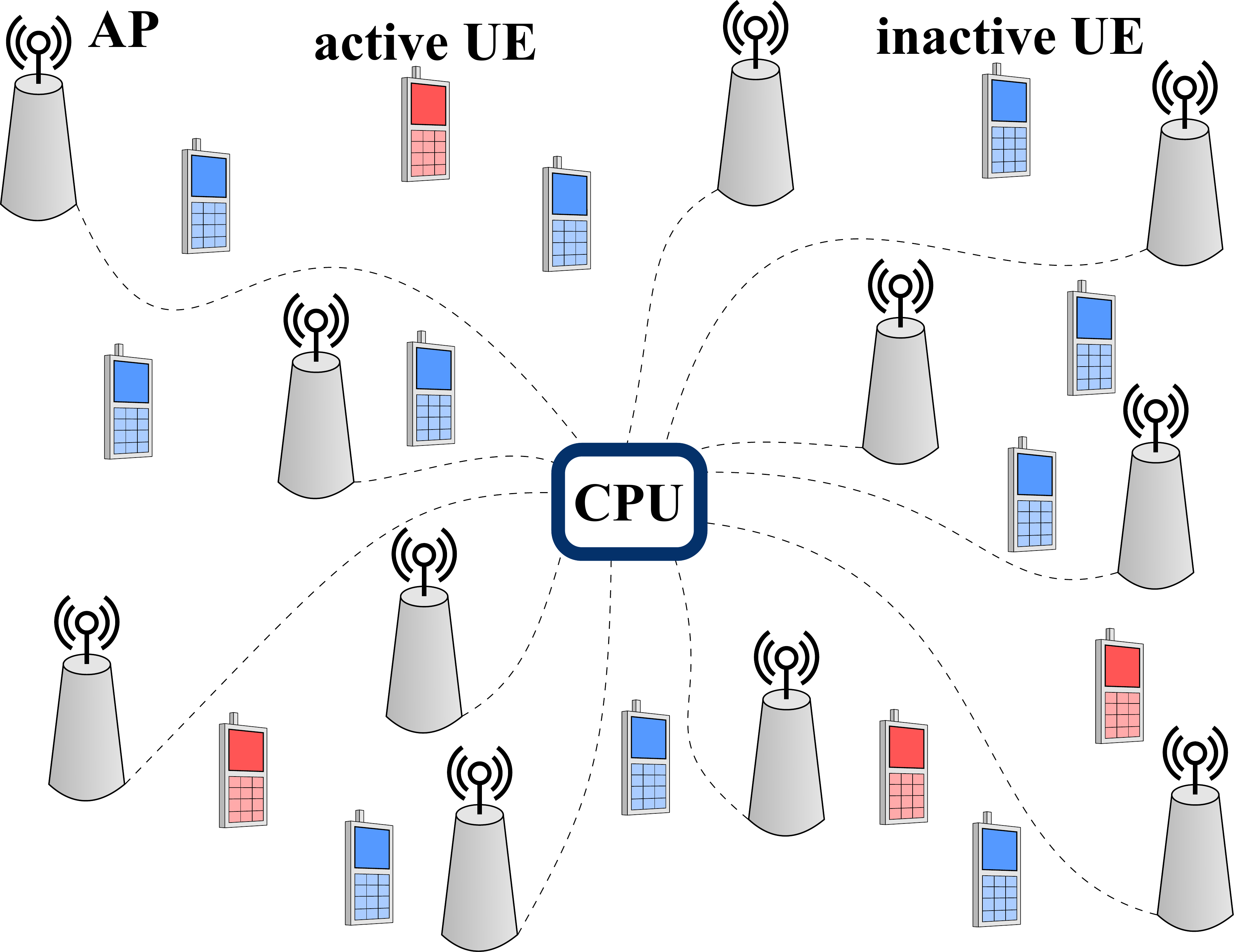}}
    \caption{\Ac{GF-CF-MaMIMO} network with geographically distributed \acp{AP} and \acp{UE} exhibiting different activity states.}
    \label{fig:system_model}
    \vspace*{-2mm}
\end{figure}
All \acp{AP} are connected to a \ac{CPU} via fronthaul links.
Due to the sporadic traffic characteristics of the network, only a subset of the $K$ \acp{UE} is active and transmits data simultaneously.
Let $\lmat{Y}_l=[\lvec{y}_{l,1}\cdots\lvec{y}_{l,T}]\in\Cset^{N\times T}$ denote the signal received at \ac{AP} $l$ over a coherence block of $T$ channel uses.
It is given by
\vspace*{-2mm}
\begin{equation}
	\lmat{Y}_l = \lmat{H}_l\lmat{U}\lmat{X} + \lmat{N}_l = \sum_{k=1}^K\lvec{h}_{l,k}u_k\lvec{x}_k\tran + \lmat{N}_l,
	\label{eq:Y_l}
\end{equation}
where  $\lmat{H}_l=[\lvec{h}_{l,1}\cdots\lvec{h}_{l,K}]\!\in\!\Cset^{N\times K}$ denotes the channel matrix of \ac{AP} $l$ with $\lvec{h}_{l,k}\!\in\!\Cset^{N\times1}$ being the channel between \ac{AP} $l$ and \ac{UE} $k$.
The diagonal activity matrix $\lmat{U}=\diag{u_1,\dots,u_K}\in\{0,1\}^{K\times K}$ contains the binary activity indicators where $u_k=1$ if \ac{UE} $k$ is active and $u_k=0$ otherwise.
$\lmat{X}=[\lvec{x}_1\cdots\lvec{x}_K]\tran\!\in\!\Cset^{K\times T}$ is the transmit symbol matrix with $\lvec{x}_k\!\in\!\Cset^{T\times1}$ representing the transmit sequence of \ac{UE} $k$ and $\lmat{N}_l\in\Cset^{N\times T}$ is the matrix of \ac{iid} \ac{AWGN} elements $n\sim\CN{n|0,\sigma_n^2}$.
The channels and the user activity indicators are constant during the channel coherence interval.
We assume block Rayleigh fading channels, i.e., $\lvec{h}_{l,k}\sim p_{h_{l,k}}(\lvec{h}_{l,k})=\CN{\lvec{h}_{l,k}|\lvec{0}_N,\gmat{\Xi}_{l,k}}$, where $\gmat{\Xi}_{l,k}\in\Cset^{N\times N}$ is the spatial correlation matrix and $\xi_{l,k}=\frac{1}{N}\trace{\gmat{\Xi}_{l,k}}$ is the associated large-scale fading coefficient.
The activity indicator $u_k$ is drawn from a Bernoulli distribution, $u_k\sim p_u(u_k)=(1-\lambda)\ind{u_k=0}+\lambda\ind{u_k=1}$ where $\lambda$ denotes the user activity probability.
The transmit symbol matrix consists of a pilot part $\pilot{\lmat{X}}\in\Cset^{K\times T_p}$ with known pilot symbols $\pilot{x}_{kt}$ and a data part $\data{\lmat{X}}\in\mathcal{X}^{K\times T_d}$ with data symbols $\data{x}_{kt}$ such that $\lmat{X}=[\pilot{\lmat{X}}\;\data{\lmat{X}}]$ and $T_p+T_d=T$.
The transmit symbols belong to the constellation $\mathcal{X}$ of cardinality $M=|\mathcal{X}|$ which does not contain the zero-symbol, i.e., $0\notin\mathcal{X}$.
All \acp{UE} transmit with the same average transmit power $\sigma_x^2=\est{}{|x_{kt}|^2}$.
A similar decomposition holds for the receive matrix, i.e., $\lmat{Y}_l=[\pilot{\lmat{Y}}_l\;\data{\lmat{Y}}_l]$ with received pilots $\pilot{\lmat{Y}}_l\in\Cset^{N\times T_p}$ and received data $\data{\lmat{Y}}_l\in\Cset^{N\times T_d}$.
Finally, we assume $T_p<K$ since the number of \acp{UE} can be very large and assigning orthogonal pilot sequences to the \acp{UE} is generally impractical.
This inevitably leads to the so-called \ac{PC} effect.

\vspace*{-1mm}
\section{Problem Formulation}\label{sec:problem}
\vspace*{-1mm}
The non-orthogonality of the pilot sequences degrades both user activity detection and channel estimation performance.
The detrimental effect of \ac{PC} can be mitigated by exploiting the detected data symbols to improve activity detection and channel estimation, which in turn enables a refinement of the detected data symbols and yields an iterative \ac{JACD} approach.
At first, we summarize the received signals from all \acp{AP} in the global model
\begin{equation}
	\lmat{Y} = \lmat{H}\lmat{U}\lmat{X} + \lmat{N},
	\label{eq:Y}
\end{equation}
where, for $\lmat{A} \equiv \{\lmat{Y}, \lmat{H}, \lmat{N} \}$, we define $\lmat{A}=[\lmat{A}_1\tran\cdots\lmat{A}_L\tran]\tran$.
For \ac{JACD}, the receiver jointly estimates the user-activity, channel, and data matrices $\lmat{U}$,  $\lmat{H}$, and $\data{\lmat{X}}$, respectively.
The \ac{MAP} estimator is given by
\begin{equation}
	(\hat{\lmat{U}},\hat{\lmat{H}},\data{\hat{\lmat{X}}}) = \argmax_{\lmat{U},\lmat{H},\data{\lmat{X}}}\;p(\lmat{U},\lmat{H},\data{\lmat{X}}|\lmat{Y},\pilot{\lmat{X}}),
	\label{eq:MAP}
\end{equation}
where the \ac{APP} distribution $p(\lmat{U},\lmat{H},\data{\lmat{X}}|\lmat{Y},\pilot{\lmat{X}})$ can be factorized using Bayes' rule as
\begin{equation}
	p(\lmat{U},\lmat{H},\data{\lmat{X}}|\lmat{Y},\pilot{\lmat{X}})\propto p(\lmat{Y}|\lmat{U},\lmat{H},\lmat{X})\cdot p(\lmat{U})\cdot p(\lmat{H})\cdot p(\lmat{X}).
	\label{eq:APP}
\end{equation}
Direct \ac{MAP} inference in~\eqref{eq:MAP} is computationally intractable due to the high dimensionality of the involved variables.
Hence, in Section~\ref{sec:JACD-EP-BG}, we propose two low-complexity approximate Bayesian learning methods for \ac{JACD}.
Since these methods rely on \ac{EP} and \ac{BG} distributions, we review the key properties of the exponential family and provide an exponential family representation of the \ac{BG} distribution in the following section.

\vspace*{-2mm}
\section{Bernoulli-Gaussian Distribution in Exponential Family Form}\label{sec:BG}
\vspace*{-1mm}
In this section, we derive the exponential family representation of the \ac{BG} distribution.
To this end, we first recall the general form of an exponential family distribution and, then, show that the \ac{BG} distribution can be expressed in exponential family form.

\vspace*{-2mm}
\subsection{Exponential Family}\label{subsec:exp_fam}
A probability distribution belongs to the exponential family if it can be expressed as~\cite{Wainwright2007}
\begin{equation}
	p(\lvec{x}) = \er^{\gvec{\eta}\herm\lvec{u}(\lvec{x})-A(\gvec{\eta})},
	\label{eq:exp_fam}
\end{equation}
where $\gvec{\eta}$ is the vector of natural parameters, $\lvec{u}(\lvec{x})$ is the vector of sufficient statistics, and $A(\gvec{\eta})$ is the log-partition function.
Exponential family distributions enjoy convenient properties, e.g., they allow simple multiplications and divisions of probability distributions, which makes them attractive in the \ac{EP} framework.
Two prominent members of the exponential family are the Bernoulli and Gaussian distributions reviewed below.

Let $x\in\{0,1\}$ be a Bernoulli-distributed random variable that equals one with probability $\lambda$ and zero with probability $1-\lambda$.
The corresponding \ac{PMF} is given by
\begin{equation}
	p_\text{B}(x) = (1-\lambda)^{\ind{x=0}}\cdot\lambda^{1-\ind{x=0}} = \er^{\eta_\text{B}u_\text{B}(x) - A_\text{B}(\eta_\text{B})},
	\label{eq:Bern}
\end{equation}
with natural parameter $\eta_\text{B}=\log\frac{1-\lambda}{\lambda}$, sufficient statistic $u_\text{B}(x)=\ind{x=0}$, and log-partition function $A_\text{B}(\eta_\text{B})=-\log\lambda=\log(1+\er^{\eta_\text{B}})$.

The exponential family representation of the proper multivariate complex Gaussian distribution with mean $\gvec{\mu}$ and covariance matrix $\lmat{C}$ is given by the \ac{PDF}
\begin{equation}
	p_\text{G}(\lvec{x}) = \CN{\lvec{x}|\gvec{\mu},\lmat{C}} = \er^{\gvec{\eta}_\text{G}\herm\lvec{u}_\text{G}(\lvec{x}) - A_\text{G}(\gvec{\eta}_\text{G})},
	\label{eq:Gauss_exp_fam}
\end{equation}
with natural parameters $\gmat{\Lambda}=\lmat{C}^{-1}$, $\gvec{\gamma}=\lmat{C}^{-1}\gvec{\mu}$, and corresponding vector of natural parameters $\gvec{\eta}_\text{G}=\big[\gvec{\gamma}\tran,\gvec{\gamma}\herm,-\vect{\gmat{\Lambda}}\tran\big]\tran$, sufficient statistics $\lvec{u}_\text{G}(\lvec{x})=\big[\lvec{x}\tran,\lvec{x}\herm,\vect{\lvec{x}\lvec{x}\herm}\tran\big]\tran$, and log-partition function $A_\text{G}(\gvec{\eta}_\text{G})=\gvec{\gamma}\herm\gmat{\Lambda}^{-1}\gvec{\gamma}-\log|\pi^{-1}\gmat{\Lambda}|$.

\vspace*{-2mm}
\subsection{Bernoulli-Gaussian Distribution}\label{subsec:BG_exp_fam}
A \ac{BG} random variable describes two mutually exclusive events.
The first event occurs with probability $\lambda$ and yields a Gaussian random vector with mean $\gvec{\mu}$ and covariance matrix $\lmat{C}$.
In the complementary event, which occurs with probability $1-\lambda$, the random vector is zero.
We denote the probability $\lambda$ as activity probability and the probability mass at zero $1-\lambda$ as inactivity probability.
By introducing the Bernoulli indicator as for the Bernoulli distribution in~\eqref{eq:Bern}, the \ac{BG} model can be expressed in exponential family form as shown by the following proposition.
\begin{proposition}\label{prop:BG_exp_fam}
The exponential family representation of the \ac{BG} distribution with activity probability $\lambda$ and the proper complex Gaussian event characterized by mean $\gvec{\mu}$ and covariance matrix $\lmat{C}$ is given by
\begin{equation}
	p_\mathrm{BG}(\lvec{x}) = \BG{\lvec{x}|\lambda,\gvec{\mu},\lmat{C}} = \er^{\gvec{\eta}_\mathrm{BG}\herm\lvec{u}_\mathrm{BG}(\lvec{x}) - A_\mathrm{BG}(\gvec{\eta}_\mathrm{BG})},
	\label{eq:BG_exp_fam}
\end{equation}
with vector of natural parameters $\gvec{\eta}_\mathrm{BG}=\left[\kappa,\gvec{\eta}_\mathrm{G}\tran\right]\tran$, sufficient statistics $\lvec{u}_\mathrm{BG}(\lvec{x})=\left[\ind{\lvec{x}=\lvec{0}},\lvec{u}_\mathrm{G}(\lvec{x})\tran\right]\tran$, and log-partition function $A_\mathrm{BG}(\gvec{\eta}_\mathrm{BG})=\log\left(\er^{A_\mathrm{G}(\gvec{\eta}_\mathrm{G})}+\er^\kappa\right)$, where $\kappa\coloneq\log\frac{1-\lambda}{\lambda}+A_\mathrm{G}(\gvec{\eta}_\mathrm{G})$.
\end{proposition}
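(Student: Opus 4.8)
The plan is to substitute the claimed natural parameters, sufficient statistics, and log-partition function directly into the generic exponential family form~\eqref{eq:exp_fam} and to verify that the resulting expression reproduces the defining mixture of the \ac{BG} distribution: a probability mass $1-\lambda$ at $\lvec{x}=\lvec{0}$ together with the scaled Gaussian density $\lambda\,\CN{\lvec{x}|\gvec{\mu},\lmat{C}}$ on $\lvec{x}\neq\lvec{0}$. The structural fact that makes this work is that the Gaussian sufficient statistics vanish at the origin, $\lvec{u}_\mathrm{G}(\lvec{0})=\lvec{0}$, because $\lvec{u}_\mathrm{G}(\lvec{x})=[\lvec{x}\tran,\lvec{x}\herm,\vect{\lvec{x}\lvec{x}\herm}\tran]\tran$. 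Consequently the exponent splits as $\gvec{\eta}_\mathrm{BG}\herm\lvec{u}_\mathrm{BG}(\lvec{x})=\kappa\,\ind{\lvec{x}=\lvec{0}}+\gvec{\eta}_\mathrm{G}\herm\lvec{u}_\mathrm{G}(\lvec{x})$, and the two mutually exclusive events $\lvec{x}=\lvec{0}$ and $\lvec{x}\neq\lvec{0}$ can be checked separately.

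First I would take $\lvec{x}\neq\lvec{0}$, where $\ind{\lvec{x}=\lvec{0}}=0$, so the right-hand side of~\eqref{eq:BG_exp_fam} equals $\er^{\gvec{\eta}_\mathrm{G}\herm\lvec{u}_\mathrm{G}(\lvec{x})}\big/\big(\er^{A_\mathrm{G}(\gvec{\eta}_\mathrm{G})}+\er^{\kappa}\big)$. Setting this equal to $\lambda\,\CN{\lvec{x}|\gvec{\mu},\lmat{C}}=\lambda\,\er^{\gvec{\eta}_\mathrm{G}\herm\lvec{u}_\mathrm{G}(\lvec{x})-A_\mathrm{G}(\gvec{\eta}_\mathrm{G})}$ and cancelling the common factor $\er^{\gvec{\eta}_\mathrm{G}\herm\lvec{u}_\mathrm{G}(\lvec{x})}$ yields $(1-\lambda)\,\er^{A_\mathrm{G}(\gvec{\eta}_\mathrm{G})}=\lambda\,\er^{\kappa}$, which is precisely the stated choice $\kappa=\log\frac{1-\lambda}{\lambda}+A_\mathrm{G}(\gvec{\eta}_\mathrm{G})$. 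Next, for $\lvec{x}=\lvec{0}$, using $\lvec{u}_\mathrm{G}(\lvec{0})=\lvec{0}$ the right-hand side of~\eqref{eq:BG_exp_fam} becomes $\er^{\kappa}\big/\big(\er^{A_\mathrm{G}(\gvec{\eta}_\mathrm{G})}+\er^{\kappa}\big)$; inserting $\er^{\kappa}=\frac{1-\lambda}{\lambda}\er^{A_\mathrm{G}(\gvec{\eta}_\mathrm{G})}$ and simplifying gives $1-\lambda$, the required inactivity mass. Hence both sides agree pointwise for the $\kappa$ claimed.

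It remains to confirm that $A_\mathrm{BG}(\gvec{\eta}_\mathrm{BG})=\log\big(\er^{A_\mathrm{G}(\gvec{\eta}_\mathrm{G})}+\er^{\kappa}\big)$ is genuinely the log-partition function, i.e., that $\er^{\gvec{\eta}_\mathrm{BG}\herm\lvec{u}_\mathrm{BG}(\lvec{x})}$ integrates to $\er^{A_\mathrm{BG}(\gvec{\eta}_\mathrm{BG})}$ against the reference measure of a \ac{BG} variable, namely Lebesgue measure on $\Cset^N$ augmented by a Dirac mass $\delta(\lvec{x})$ at the origin. The atom contributes $\er^{\kappa}$, obtained by evaluating the integrand at $\lvec{x}=\lvec{0}$ where $\lvec{u}_\mathrm{G}=\lvec{0}$, while the continuous part contributes $\int \er^{\gvec{\eta}_\mathrm{G}\herm\lvec{u}_\mathrm{G}(\lvec{x})}\,\mathrm{d}\lvec{x}=\er^{A_\mathrm{G}(\gvec{\eta}_\mathrm{G})}$ by normalization of the Gaussian \ac{PDF}~\eqref{eq:Gauss_exp_fam} (the single point $\lvec{x}=\lvec{0}$ has zero Lebesgue measure, so the indicator term there is immaterial). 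Adding the two contributions gives $\er^{A_\mathrm{G}(\gvec{\eta}_\mathrm{G})}+\er^{\kappa}$, as required.

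The only delicate point I anticipate is the bookkeeping around this mixed reference measure: one must keep the mass at $\lvec{x}=\lvec{0}$ as a genuine atom so that it is picked up by the integral defining $A_\mathrm{BG}$, yet simultaneously note that it contributes nothing to the continuous Gaussian component, and one should check that evaluating the exponential family form ``as a density'' at $\lvec{x}=\lvec{0}$ returns the finite value $1-\lambda$ rather than a divergent spike. Once this convention is fixed, the remainder is elementary manipulation of the already-derived $A_\mathrm{G}$, $\lvec{u}_\mathrm{G}$, and the definition of $\kappa$.
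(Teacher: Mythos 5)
Your proof is correct and follows essentially the same elementary route as the paper: the paper rewrites the mixture $(1-\lambda)^{\ind{\lvec{x}=\lvec{0}}}\big(\lambda\,\er^{\gvec{\eta}_\mathrm{G}\herm\lvec{u}_\mathrm{G}(\lvec{x})-A_\mathrm{G}(\gvec{\eta}_\mathrm{G})}\big)^{1-\ind{\lvec{x}=\lvec{0}}}$ into exponential form by collecting terms, which is exactly your two-case check ($\lvec{x}=\lvec{0}$ versus $\lvec{x}\neq\lvec{0}$) run in the forward direction, both arguments hinging on $\lvec{u}_\mathrm{G}(\lvec{0})=\lvec{0}$ (used implicitly in the paper). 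Your extra verification that $A_\mathrm{BG}$ is genuinely the log-partition function with respect to the mixed Lebesgue-plus-atom reference measure is correct and merely makes explicit the base-measure bookkeeping the paper leaves implicit.
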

\begin{proof}
The \ac{BG} mixture model can be expressed in terms of the Bernoulli indicator as
\begin{equation*}
	p_\text{BG}(\lvec{x}) = (1-\lambda)^{\ind{\lvec{x}=\lvec{0}}}\cdot(\lambda\cdot\er^{\gvec{\eta}_\text{G}\herm\lvec{u}_\text{G}(\lvec{x}) - A_\text{G}(\gvec{\eta}_\text{G})})^{1-\ind{\lvec{x}=\lvec{0}}}.
\end{equation*}
Collecting all terms into natural parameter and sufficient statistic vectors yields
\begin{align*}
    p_\text{BG}(\lvec{x}) &= \Big(\frac{1-\lambda}{\lambda}\Big)^{\ind{\lvec{x}=\lvec{0}}}\!\!\!\cdot\!\er^{A_\text{G}(\gvec{\eta}_\text{G})\!\cdot\!\ind{\lvec{x}=\lvec{0}}}\!\cdot\!\er^{\log\lambda}\!\cdot\!\er^{\gvec{\eta}_\text{G}\herm\lvec{u}_\text{G}(\lvec{x}) - A_\text{G}(\gvec{\eta}_\text{G})}\\
    &= \er^{\kappa\cdot\ind{\lvec{x}=\lvec{0}} + \gvec{\eta}_\text{G}\herm\lvec{u}_\text{G}(\lvec{x}) + \log\lambda - A_\text{G}(\gvec{\eta}_\text{G})},
\end{align*}
which proves the claim with $\gvec{\eta}_\text{BG}$, $\lvec{u}_\text{BG}(\lvec{x})$, and $A_\text{BG}(\gvec{\eta}_\text{BG})$ defined above.
\end{proof}
We observe that the exponential family representation of the \ac{BG} model introduces an additional natural parameter and sufficient statistic beyond those of the Gaussian distribution which are related to the inactivity event.
The probability distribution $\BG{\lvec{x}|\lambda,\gvec{\mu},\lmat{C}}$ returns the inactivity probability $1\!-\!\lambda$ when $\lvec{x}\hspace{-1em/24}=\hspace{-1em/24}\lvec{0}$, and for $\lvec{x}\hspace{-1em/24}\neq\hspace{-1em/24}\lvec{0}$ the corresponding \ac{PDF} value of the Gaussian event scaled by the activity probability $\lambda$.

The key advantage of the exponential family representation is the straightforward computation of normalized products or quotients of the respective distributions by simple addition or subtraction of the natural parameters, respectively.
For example, the product of two \ac{BG} models characterized by the natural parameters $(\kappa_1,\gvec{\gamma}_1,\gmat{\Lambda}_1)$ and $(\kappa_2,\gvec{\gamma}_2,\gmat{\Lambda}_2)$, respectively, yields another \ac{BG} model with natural parameters $(\kappa_1+\kappa_2,\gvec{\gamma}_1+\gvec{\gamma}_2,\gmat{\Lambda}_1+\gmat{\Lambda}_2)$.
To offer further insights, we present in the following the results in the original parameter space, i.e., activity probability $\lambda$, mean $\gvec{\mu}$, and covariance matrix $\lmat{C}$, and explicitly state the normalization constant.
Similar results can be derived for quotients of \ac{BG} models but are omitted for brevity.
\begin{corollary}[Bernoulli-Gaussian Product Lemma]\label{co:BG_product}
The product of two \ac{BG} models yields a new unnormalized \ac{BG} model,
\begin{align}
\begin{alignedat}{2}
	&\BG{\lvec{x}|\lambda_1,\gvec{\mu}_1,\lmat{C}_1} \,\cdot &&\BG{\lvec{x}|\lambda_2,\gvec{\mu}_2,\lmat{C}_2}\\
    &\quad= \BG{\lvec{x}|\lambda,\gvec{\mu},\lmat{C}} &&\cdot \big[\lambda_1\lambda_2\CN{\lvec{0}|\gvec{\mu}_1-\gvec{\mu}_2,\lmat{C}_1+\lmat{C}_2}\\
	&&&\quad+(1-\lambda_1)(1-\lambda_2)\big]
	\label{eq:BG_product}
\end{alignedat}
\end{align}
with
\vspace*{-2mm}
\begin{align}
	\lambda &= \frac{\lambda_1\lambda_2\CN{\lvec{0}|\gvec{\mu}_1-\gvec{\mu}_2,\lmat{C}_1+\lmat{C}_2}}{\lambda_1\lambda_2\CN{\lvec{0}|\gvec{\mu}_1-\gvec{\mu}_2,\lmat{C}_1+\lmat{C}_2}+(1-\lambda_1)(1-\lambda_2)},
	\label{eq:BG_product_lambda}\\
	\lmat{C} &= \left(\lmat{C}_1^{-1}+\lmat{C}_2^{-1}\right)^{-1},
	\label{eq:BG_product_C}\\
	\gvec{\mu} &= \lmat{C}\left(\lmat{C}_1^{-1}\gvec{\mu}_1+\lmat{C}_2^{-1}\gvec{\mu}_2\right).
	\label{eq:BG_product_mu}
\end{align}
\end{corollary}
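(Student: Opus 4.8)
The plan is to prove the identity pointwise in $\lvec{x}$, exploiting the two--regime structure of the \ac{BG} density noted after Proposition~\ref{prop:BG_exp_fam}: $\BG{\lvec{x}|\lambda,\gvec{\mu},\lmat{C}}$ equals the point mass $1-\lambda$ at $\lvec{x}=\lvec{0}$ and the scaled density $\lambda\,\CN{\lvec{x}|\gvec{\mu},\lmat{C}}$ for $\lvec{x}\neq\lvec{0}$. Abbreviating the bracketed scalar on the right-hand side of~\eqref{eq:BG_product} as $Z\coloneq\lambda_1\lambda_2\CN{\lvec{0}|\gvec{\mu}_1-\gvec{\mu}_2,\lmat{C}_1+\lmat{C}_2}+(1-\lambda_1)(1-\lambda_2)$, the goal reduces to showing that the product on the left equals $Z\cdot\BG{\lvec{x}|\lambda,\gvec{\mu},\lmat{C}}$ with $\lambda$, $\gvec{\mu}$, $\lmat{C}$ as in~\eqref{eq:BG_product_lambda}--\eqref{eq:BG_product_mu}, after which I verify the two cases separately.

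For $\lvec{x}=\lvec{0}$, the left-hand side collapses to $(1-\lambda_1)(1-\lambda_2)$, while the right-hand side evaluates to $Z(1-\lambda)$; these coincide because the definition of $\lambda$ in~\eqref{eq:BG_product_lambda} is exactly equivalent to $1-\lambda=(1-\lambda_1)(1-\lambda_2)/Z$. For $\lvec{x}\neq\lvec{0}$, the left-hand side is $\lambda_1\lambda_2\,\CN{\lvec{x}|\gvec{\mu}_1,\lmat{C}_1}\,\CN{\lvec{x}|\gvec{\mu}_2,\lmat{C}_2}$, and here I invoke the standard Gaussian product identity
\begin{equation*}
\CN{\lvec{x}|\gvec{\mu}_1,\lmat{C}_1}\,\CN{\lvec{x}|\gvec{\mu}_2,\lmat{C}_2}=\CN{\lvec{0}|\gvec{\mu}_1-\gvec{\mu}_2,\lmat{C}_1+\lmat{C}_2}\,\CN{\lvec{x}|\gvec{\mu},\lmat{C}},
\end{equation*}
whose combined covariance and mean are precisely~\eqref{eq:BG_product_C} and~\eqref{eq:BG_product_mu}. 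Substituting this and using $\lambda=\lambda_1\lambda_2\CN{\lvec{0}|\gvec{\mu}_1-\gvec{\mu}_2,\lmat{C}_1+\lmat{C}_2}/Z$ turns the left-hand side into $Z\,\lambda\,\CN{\lvec{x}|\gvec{\mu},\lmat{C}}=Z\,\BG{\lvec{x}|\lambda,\gvec{\mu},\lmat{C}}$, matching the right-hand side. The two cases together establish~\eqref{eq:BG_product}.

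As an alternative derivation and a useful consistency check, the same result follows from Proposition~\ref{prop:BG_exp_fam} directly: multiplying the two exponential-family forms adds the natural parameters, giving $(\kappa_1+\kappa_2,\gvec{\gamma}_1+\gvec{\gamma}_2,\gmat{\Lambda}_1+\gmat{\Lambda}_2)$, and leaves the scalar factor $\er^{A_\mathrm{BG}(\gvec{\eta}_{\mathrm{BG},1}+\gvec{\eta}_{\mathrm{BG},2})-A_\mathrm{BG}(\gvec{\eta}_{\mathrm{BG},1})-A_\mathrm{BG}(\gvec{\eta}_{\mathrm{BG},2})}$. Expanding this prefactor and the combined $(\kappa,\gvec{\gamma},\gmat{\Lambda})$ with the definitions of $A_\mathrm{BG}$, $A_\mathrm{G}$, and $\kappa$ from Proposition~\ref{prop:BG_exp_fam} reproduces $Z$ and~\eqref{eq:BG_product_lambda}--\eqref{eq:BG_product_mu}; I would keep the pointwise argument as the primary proof since it is shorter.

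The only genuine obstacle is bookkeeping rather than mathematics: one must ensure that the Gaussian product identity's prefactor $\CN{\lvec{0}|\gvec{\mu}_1-\gvec{\mu}_2,\lmat{C}_1+\lmat{C}_2}$ and the updated parameters~\eqref{eq:BG_product_C}--\eqref{eq:BG_product_mu} are exactly those appearing in the statement, and that the $\lvec{x}=\lvec{0}$ contribution is treated as a separate point mass and not absorbed into a density. All the substantive content is contained in the Gaussian product lemma together with the definitions of $Z$ and $\lambda$, so the write-up is essentially a clean presentation of the two-case split.
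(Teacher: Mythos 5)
Your proof is correct, but it takes a genuinely different route from the paper's. You verify \eqref{eq:BG_product} pointwise, splitting the mixed discrete--continuous structure into the point-mass case $\lvec{x}=\lvec{0}$, where both sides reduce to $(1-\lambda_1)(1-\lambda_2)$ via $1-\lambda=(1-\lambda_1)(1-\lambda_2)/Z$, and the case $\lvec{x}\neq\lvec{0}$, where the Gaussian product lemma supplies \eqref{eq:BG_product_C}--\eqref{eq:BG_product_mu} together with the prefactor $\CN{\lvec{0}|\gvec{\mu}_1-\gvec{\mu}_2,\lmat{C}_1+\lmat{C}_2}$. The paper's proof in Appendix~\ref{app:BG_product} instead works entirely in the exponential-family parameterization of Proposition~\ref{prop:BG_exp_fam}: it adds the natural parameters $(\kappa,\gvec{\gamma},\gmat{\Lambda})$, identifies the unnormalized remainder as $\er^{A_\text{BG}(\gvec{\eta}_{\text{BG},1}+\gvec{\eta}_{\text{BG},2})-A_\text{BG}(\gvec{\eta}_{\text{BG},1})-A_\text{BG}(\gvec{\eta}_{\text{BG},2})}$, and then translates $\kappa=\kappa_1+\kappa_2$ back into \eqref{eq:BG_product_lambda} using $\lambda=1/\big(1+\er^{\kappa-A_\text{G}(\gvec{\eta}_\text{G})}\big)$ and the Gaussian product lemma --- essentially the ``alternative derivation'' you only sketch. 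Your pointwise case split is shorter and more elementary; what the paper's route buys is an explicit demonstration that the BG family of Proposition~\ref{prop:BG_exp_fam} is closed under multiplication by simple addition of natural parameters, which is precisely the structural property the EP message updates rely on, and it produces the normalization constant directly from the log-partition functions. Your treatment of the $\lvec{x}=\lvec{0}$ contribution as a separate point-mass value rather than a density is exactly the right reading of the paper's BG convention, so there is no gap in your argument.
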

\vspace*{-2mm}
\begin{proof}
The proof is shown in Appendix~\ref{app:BG_product}.
\end{proof}
\vspace*{-1mm}
The mean~\eqref{eq:BG_product_mu} and covariance matrix~\eqref{eq:BG_product_C} of the Gaussian part are identical to those in the Gaussian product lemma\footnote{Gaussian product lemma~\cite{Bromiley2003},~\cite{Ngo2020}: $\CN{\lvec{x}|\gvec{\mu}_1,\lmat{C}_1} \cdot \CN{\lvec{x}|\gvec{\mu}_2,\lmat{C}_2} = \CN{\lvec{x}|\gvec{\mu},\lmat{C}} \cdot \CN{\lvec{0}|\gvec{\mu}_1-\gvec{\mu}_2,\lmat{C}_1+\lmat{C}_2}$ with $\lmat{C} = \big(\lmat{C}_1^{-1}+\lmat{C}_2^{-1}\big)^{-1}$ and $\gvec{\mu} = \lmat{C}\big(\lmat{C}_1^{-1}\gvec{\mu}_1+\lmat{C}_2^{-1}\gvec{\mu}_2\big).$}.
The activity probability~\eqref{eq:BG_product_lambda} provides insights when we explicitly consider the two events a \ac{BG} random variable characterizes.
The event of inactivity occurs with a probability that is proportional to the product of the two factor inactivity probabilities, i.e., $(1-\lambda_1)(1-\lambda_2)$.
The activity probability~\eqref{eq:BG_product_lambda} is proportional to the product of the two factor activity probabilities, i.e., $\lambda_1\lambda_2$, and a correction factor which measures how well the two Gaussian densities match, i.e., $\CN{\lvec{0}|\gvec{\mu}_1-\gvec{\mu}_2,\lmat{C}_1+\lmat{C}_2}$.
The correction factor is the normalization constant that appears in the Gaussian product lemma.
Note that for $\lambda_1=\lambda_2=1$, the \ac{BG} product lemma reduces to the Gaussian product lemma.

For illustration purposes, two examples of a \ac{BG} product are shown in Fig.~\ref{fig:BG_product}.
\begin{figure}[t]
    \centering
    \subfloat[$\mu_1 = -2$, $\mu_2 = 3$.]{\includegraphics[width=0.2\textwidth]{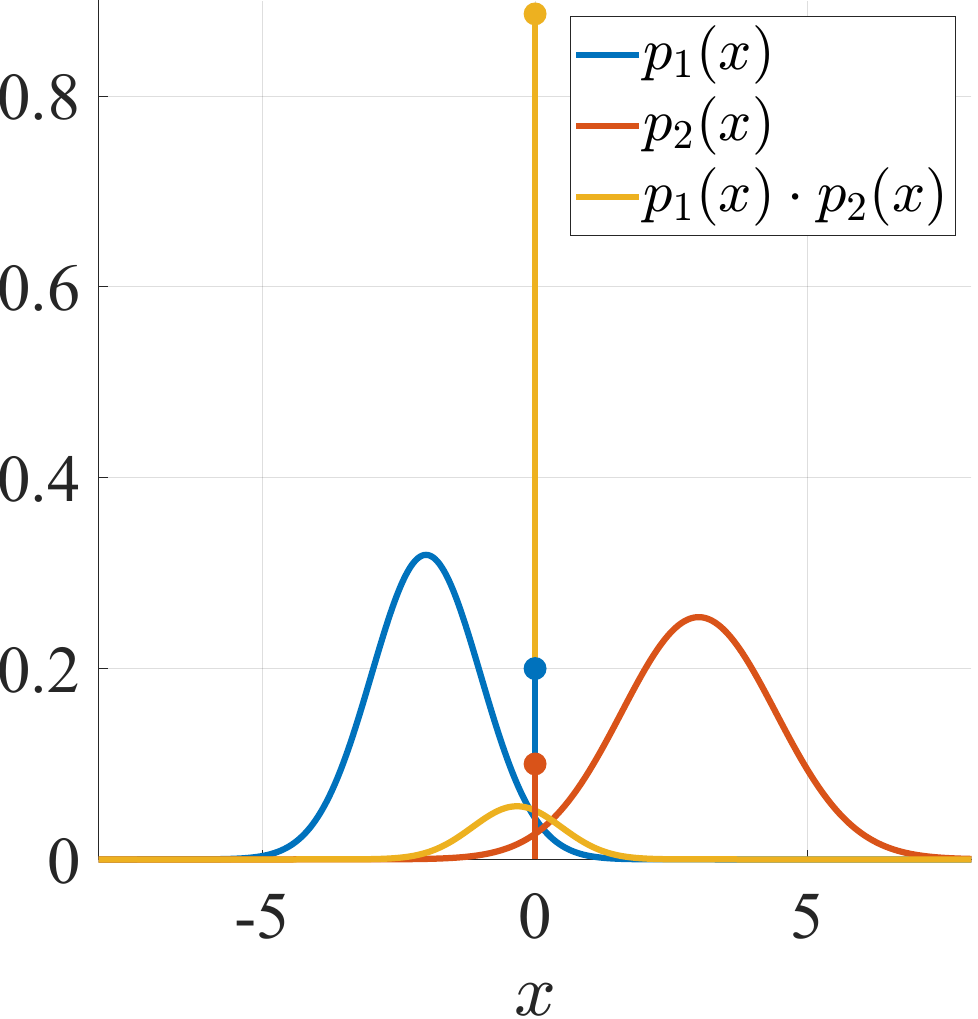}
        \label{fig:BG_product1}}\hfill
    \subfloat[$\mu_1 = -2$, $\mu_2 = -3$.]{\includegraphics[width=0.2\textwidth]{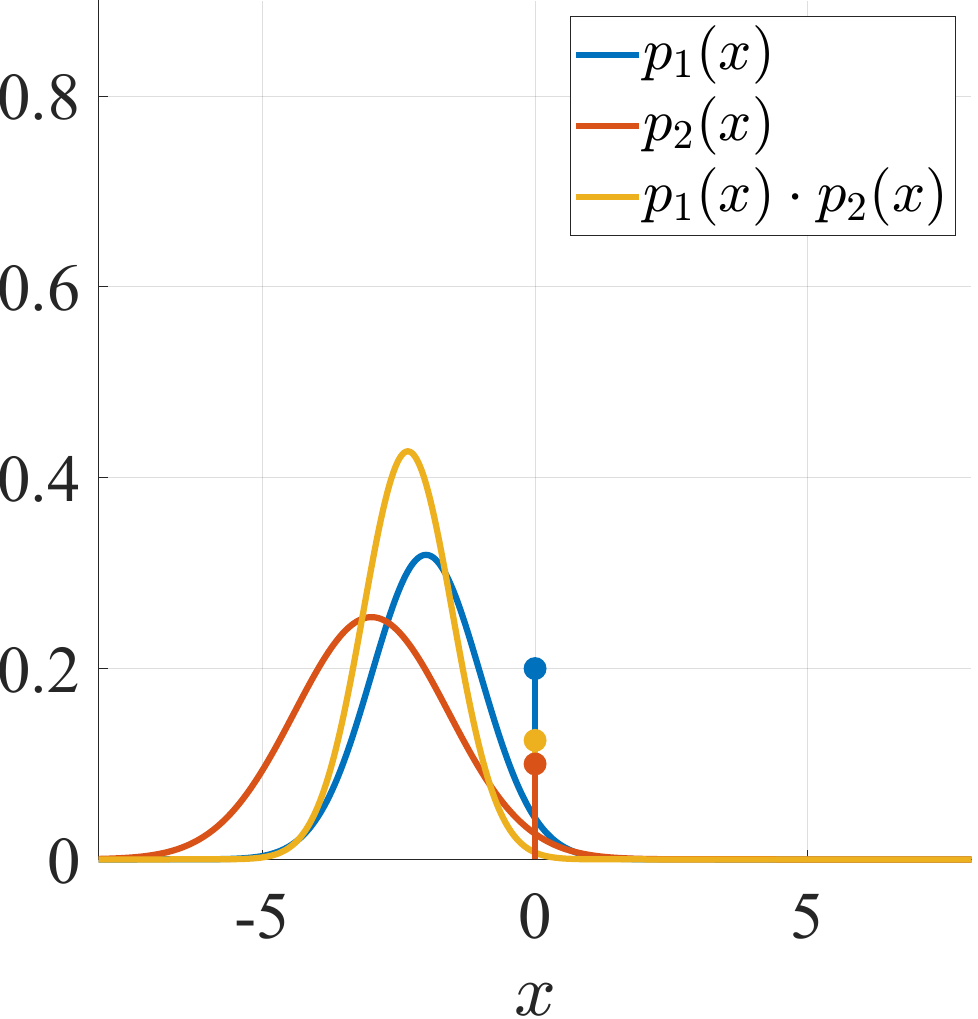}
        \label{fig:BG_product2}}
    \caption{Example BG product with $\lambda_1=0.8$ and $\lambda_2=0.9$.}
    \label{fig:BG_product}
    \vspace*{-5mm}
\end{figure}
Here, the random variable is modeled as real-valued for simplicity.
In Fig.~\ref{fig:BG_product1}, it can be observed that the area of the Gaussian component corresponding to the activity probability for the resulting product is fairly small, even though the two original \ac{BG} distributions have an activity probability of $\lambda_1=0.8$ and $\lambda_2=0.9$, respectively, because the respective Gaussian distributions do not significantly overlap.
The opposite behavior can be observed in Fig.~\ref{fig:BG_product2}.

Besides, it is worth mentioning that computing the product of a \ac{BG} and a Gaussian distribution is also straightforward with the proposed exponential family representation and reduces to simple addition of the natural parameters corresponding to the Gaussian component.
Assuming a \ac{BG} distribution with natural parameters $(\kappa_1,\gvec{\gamma}_1,\gmat{\Lambda}_1)$ and a Gaussian distribution with $(\gvec{\gamma}_2,\gmat{\Lambda}_2)$, the product is a \ac{BG} distribution with natural parameters $(\kappa_1,\gvec{\gamma}_1+\gvec{\gamma}_2,\gmat{\Lambda}_1+\gmat{\Lambda}_2)$.
The same result is obtained by modeling the \ac{BG} distribution as the sum of a weighted Dirac delta at zero and a weighted Gaussian distribution, i.e., $(1-\lambda_1)\delta(\lvec{x})+\lambda_1\CN{\lvec{x}|\gvec{\mu}_1,\lmat{C}_1}$, and then multiplying the sum by the Gaussian distribution $\CN{\lvec{x}|\gvec{\mu}_2,\lmat{C}_2}$ as proposed in~\cite{Vila2011}.
The equivalence of the two approaches is omitted for brevity.

Finally, we note that the proposed approach for the \ac{BG} exponential family representation can be readily extended to a broader class of probability distributions with an arbitrary number of discrete point masses at arbitrary positions.
Furthermore, the active part that is modeled by a continuous Gaussian distribution in our work can be replaced by another continuous distribution with exponential family representation.
The kind of distributions that can be modeled by our approach are related to the so-called \emph{spike and slab priors}~\cite{Lobato2013} where the discrete probability mass characterizes the spike and the continuous probability distribution models the slab.
This generality makes the presented approach applicable to a wide range of problems in signal processing.

\vspace*{-1mm}
\section{\acs{EP}-based \acs{JACD} Algorithms}\label{sec:JACD-EP-BG}
\vspace*{-1mm}
In this section, we propose two novel \ac{EP}-based \ac{JACD} algorithms.
The algorithms are derived by first introducing a convenient factorization of $p(\lmat{U},\lmat{H},\data{\lmat{X}}|\lmat{Y},\pilot{\lmat{X}})$ which induces a factor graph as shown in Section~\ref{subsec:FG}.
Then, we assign parametric exponential family representations to the factors to approximate the \ac{APP} distribution as discussed in Section \ref{subsec:EP_approx_fh_load}.
Different choices of these approximate exponential family models yield the two proposed algorithms.
Finally, \ac{EP} message-passing rules are applied to the factor graph as detailed in Section \ref{subsec:MP_updates}.

\vspace*{-2mm}
\subsection{Factor Graph Representation}\label{subsec:FG}
Similar to~\cite{Karataev2024}, we introduce the auxiliary variables $\lvec{g}_{l,k}\coloneq \lvec{h}_{l,k}u_k,$ and $\lvec{z}_{l,kt} \coloneq \lvec{g}_{l,k}x_{kt}$ to decouple activities, channels, and data across \acp{UE}.
We collect all auxiliary variables in the matrices $\lmat{G}$ and $\lmat{Z}$, respectively.
The joint \ac{MAP} estimator of the user activity, channel, data symbols, and auxiliary variables maximizes the \ac{APP} distribution given by
\vspace*{-2mm}
\begin{equation}
\begin{split}
	&p(\lmat{U},\lmat{H},\data{\lmat{X}},\lmat{G},\lmat{Z}|\lmat{Y},\pilot{\lmat{X}})\\
    &\!\quad\propto\!\prod_{l=1}^L\!\prod_{k=1}^K\!\prod_{t=1}^T\Big[p(\lvec{y}_{l,t}|\lvec{z}_{l,1t},\!...,\lvec{z}_{l,Kt})\!\cdot p(\lvec{z}_{l,kt}|\lvec{g}_{l,k},x_{kt})\\
	&\!\qquad\cdot p(\lvec{g}_{l,k}|\lvec{h}_{l,k},u_k)\!\cdot\tilde{p}_{u_k}(u_k)\!\cdot\tilde{p}_{h_{l,k}}(\lvec{h}_{l,k})\!\cdot p_x(x_{kt})\Big],
	\label{eq:APP_aux}
\end{split}
\end{equation}
where the factorization follows from the independence of channel vectors across \acp{AP} and \acp{UE}, the independence of user activities across \acp{UE}, and the independence of data symbols across \acp{UE} and time indices.
The terms $\tilde{p}_{u_k}(u_k)$ and $\tilde{p}_{h_{l,k}}(\lvec{h}_{l,k})$ denote refined prior information on the user activity $u_k$ and the channel $\lvec{h}_{l,k}$, respectively, which can be acquired by a pilot-based initialization algorithm.
One possible initialization algorithm is the \ac{JAC-EP} algorithm presented in~\cite{Forsch2024}.
We denote the resulting prior information for $u_k$ and $\lvec{h}_{l,k}$ by the two probabilities $\tilde{p}_{u_k}(0)$ and $\tilde{p}_{u_k}(1)$, and by the channel mean vector $\tilde{\gvec{\mu}}_{h_{l,k}}$ and covariance matrix $\tilde{\lmat{C}}_{h_{l,k}}$, respectively.
The probability distributions in~\eqref{eq:APP_aux} are represented by factor nodes (rectangles) in the factor graph illustrated in Fig.~\ref{fig:FG} and are given by
\vspace*{-3mm}
\begin{alignat*}{2}
    &\Psi_{y_{l,t}} &&\coloneq p(\lvec{y}_{l,t}|\lvec{z}_{l,1t},\!...,\lvec{z}_{l,Kt}) = \mathcal{CN}\bigg(\lvec{y}_{l,t}\bigg|\sum_{k=1}^K\lvec{z}_{l,kt},\sigma_n^2\lmat{I}_N\bigg),
    \\
    &\Psi_{z_{l,kt}} &&\coloneq p(\lvec{z}_{l,kt}|\lvec{g}_{l,k},x_{kt}) = \delta(\lvec{z}_{l,kt}-\lvec{g}_{l,k}x_{kt}),
    \\
    &\Psi_{g_{l,k}} &&\coloneq p(\lvec{g}_{l,k}|\lvec{h}_{l,k},u_k) = \delta(\lvec{g}_{l,k}-\lvec{h}_{l,k}u_k),
\end{alignat*}
\begin{alignat*}{2}
    &\Psi_{u_k} &&\coloneq \tilde{p}_{u_k}(u_k) = \tilde{p}_{u_k}(0)\ind{u_k=0}+\tilde{p}_{u_k}(1)\ind{u_k=1},
    \\
    &\Psi_{h_{l,k}} &&\coloneq \tilde{p}_{h_{l,k}}(\lvec{h}_{l,k}) = \mathcal{CN}\big(\lvec{h}_{l,k}\big|\tilde{\gvec{\mu}}_{h_{l,k}},\tilde{\lmat{C}}_{h_{l,k}}\big),
    \\
    &\Psi_{x_{kt}} &&\coloneq p_x(x_{kt}) = \begin{cases}
 {\ind{x_{kt}=\pilot{x}_{kt}}}&\text{for }t\leq T_p\\
 \frac{1}{M}\ind{x_{kt}\in\mathcal{X}}&\text{for }t>T_p
        \end{cases}.
\end{alignat*}
\begin{figure}[t]
    \centerline{\includegraphics[width=0.34\textwidth]{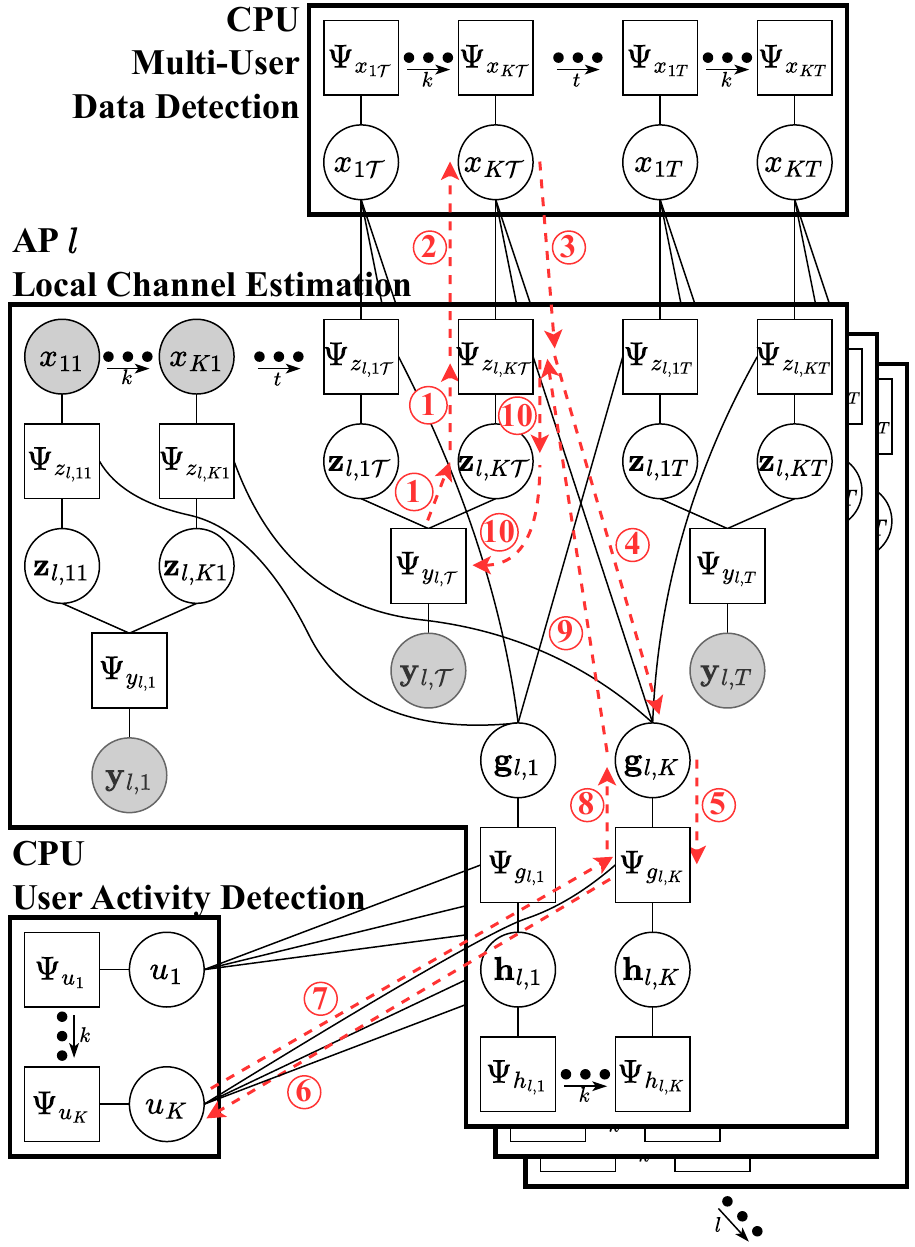}}
    \caption{Factor graph for the \ac{EP}-based \ac{JACD} algorithms with $\mathcal{T}\coloneq T_p+1$. The numbered red dashed arrows show the flow of information according to the scheduling presented in Algorithm~\ref{alg:JACD-EP-BG}. Each number corresponds to one message update in Algorithm~\ref{alg:JACD-EP-BG}.}
    \label{fig:FG}
    \vspace*{-2mm}
\end{figure}%
The variables in the above equations correspond to variable nodes (circles) in the factor graph.
The factor and variable nodes are organized to reflect their implementation at the \ac{CPU} and the \acp{AP}.
As shown in the factor graph, the \ac{CPU} combines the information from the \acp{AP} to estimate user activities and data.
In contrast, the channels are estimated locally in each \ac{AP} and do not need to be forwarded to the \ac{CPU}.

\vspace*{-2mm}
\subsection{\acs{EP} Approximations and Fronthaul Load}\label{subsec:EP_approx_fh_load}
To apply the \ac{EP} message-passing rules to the factor graph in Fig.~\ref{fig:FG}, we assign a parametric distribution representation from the exponential family to each variable node to approximate the corresponding  \ac{APP} distribution.
Categorical distributions are chosen for the variables $x_{kt}$ and $u_k$ whereas the variables $\lvec{z}_{l,kt}$ and $\lvec{h}_{l,k}$ are modeled by multivariate complex Gaussian distributions.
The variable $\lvec{g}_{l,k}$ is treated differently for the two proposed algorithms.
In the \ac{JACD-EP} algorithm, $\lvec{g}_{l,k}$ follows a multivariate complex Gaussian distribution, whereas in the \ac{JACD-EP-BG} algorithm, $\lvec{g}_{l,k}$ is modeled by a \ac{BG} vector as presented in Section~\ref{sec:BG}. The parameters characterizing these approximate posterior distributions constitute the messages propagated along the graph.

The fronthaul load is determined solely by the messages associated to $x_{kt}$ and $u_k.$ For $t>T_p, $ messages from and towards node $x_{kt}$  consist of $M-1$ probabilities while no messages are exchanged for $t\leq T_p$ since the pilot symbols are known a priori.
For $u_k$, a single real-valued parameter suffices to describe its distribution.
Hence, the total fronthaul load per iteration amounts to $2LK(T_d(M-1)+1)$ real-valued numbers where the factor two accounts for the bidirectional exchange between \ac{CPU} and \acp{AP}.

\vspace*{-2mm}
\subsection{Initialization and Scheduling}\label{subsec:init_sched_alg}
The proposed algorithms are initialized using the activity and channel estimates obtained from the \acs{JAC-EP}\acused{JAC-EP} algorithm~\cite{Forsch2024}.
This pilot-based initialization algorithm provides the soft user-activity estimates $\tilde{p}_{u_k}(u_k)$ and channel distributions $\tilde{p}_{h_{l,k}}(\lvec{h}_{l,k})$, equivalently characterized by the expectation of their sufficient statistics $(\tilde{\gvec{\mu}}_{h_{l,k}}, \tilde{\lmat{C}}_{h_{l,k}})$ or their natural parameters $(\tilde{\gvec{\gamma}}_{h_{l,k}}, \tilde{\gmat{\Lambda}}_{h_{l,k}})$, which serve as priors for the \ac{JACD-EP} and the \ac{JACD-EP-BG} algorithm.
The exchanged messages are initialized as follows.
The initial mean vector and covariance matrix for the message $\msg{\Psi_{z_{l,kt}}}{\lvec{z}_{l,kt}}$ $\forall k,l,t$ are determined from the prior information on $\lvec{z}_{l,kt}$ induced by $\tilde{p}_{u_k}(u_k)$, $\tilde{p}_{h_{l,k}}(\lvec{h}_{l,k})$, and $p_x(x_{kt})$,\footnote{We compute the expectation of the sufficient statistics of $\lvec{z}_{l,kt}$ by exploiting the independence of $u_k$, $\lvec{h}_{l,k}$, and $x_{kt}$, e.g., the mean $\est{}{\lvec{z}_{l,kt}}=\est{}{u_k}\est{}{\lvec{h}_{l,k}}\est{}{x_{kt}}$. The same approach is applied to  $\lvec{g}_{l,k}$.}
\begin{align}
	\Mumsg{\Psi_{z_{l,kt}}}{\lvec{z}_{l,kt}} \!\!&=\! \begin{cases}
 \tilde{p}_{u_k}(1)\cdot\tilde{\gvec{\mu}}_{h_{l,k}}\cdot\pilot{x}_{kt}&\text{for }t\leq T_p\\
 \lvec{0}&\text{for }t>T_p
        \end{cases},
	\label{eq:mu_Psi_z_z_init}\\
	\Cmsg{\Psi_{z_{l,kt}}}{\lvec{z}_{l,kt}} \!\!&=\! \begin{cases}
    \begin{aligned}
    \tilde{p}_{u_k}(1)\big(&\tilde{\lmat{C}}_{h_{l,k}}+\tilde{\gvec{\mu}}_{h_{l,k}}\tilde{\gvec{\mu}}_{h_{l,k}}\herm\\
    &\cdot\tilde{p}_{u_k}(0)\big)\;|\pilot{x}_{kt}|^2
    \end{aligned}
    &\text{for }t\leq T_p\\
 \tilde{p}_{u_k}(1)\big(\tilde{\lmat{C}}_{h_{l,k}}+\tilde{\gvec{\mu}}_{h_{l,k}}\tilde{\gvec{\mu}}_{h_{l,k}}\herm\big)\sigma_x^2&\text{for }t>T_p
        \end{cases}.
	\label{eq:C_Psi_z_z_init}
\end{align}
Similarly, the initialization of the messages $\msg{\Psi_{g_{l,k}}}{\lvec{g}_{l,k}}$ and $\msg{\lvec{g}_{l,k}}{\Psi_{z_{l,kt}}}$ $\forall k,l,t$ are determined from the prior information on $\lvec{g}_{l,k}$ and depends on the different \ac{EP} approximations for $\lvec{g}_{l,k}$.
The Gaussian variables $\lvec{g}_{l,k}$ in the \ac{JACD-EP} algorithm are initialized as
\begin{align}
	\Mumsg{\Psi_{g_{l,k}}}{\lvec{g}_{l,k}} \!\!&= \Mumsg{\lvec{g}_{l,k}}{\Psi_{z_{l,kt}}} \!\!= \tilde{p}_{u_k}(1)\cdot\tilde{\gvec{\mu}}_{h_{l,k}},
	\label{eq:mu_g_Psi_z_init_}\\
	\Cmsg{\Psi_{g_{l,k}}}{\lvec{g}_{l,k}} \!\!&= \Cmsg{\lvec{g}_{l,k}}{\Psi_{z_{l,kt}}} \!\!= \tilde{p}_{u_k}(1)\big(\tilde{\lmat{C}}_{h_{l,k}}\!\!+\!\tilde{\gvec{\mu}}_{h_{l,k}}\tilde{\gvec{\mu}}_{h_{l,k}}\herm\!\!\cdot\!\tilde{p}_{u_k}(0)\big).
	\label{eq:C_g_Psi_z_init_}
\end{align}
In the \ac{JACD-EP-BG} algorithm, the initialization of the \ac{BG} variables $\lvec{g}_{l,k}$ is given by
\begin{align}
	\Actmsg{\Psi_{g_{l,k}}}{\lvec{g}_{l,k}} &= \Actmsg{\lvec{g}_{l,k}}{\Psi_{z_{l,kt}}} = \tilde{p}_{u_k}(1),
	\label{eq:act_g_Psi_z_init}\\
	\Mumsg{\Psi_{g_{l,k}}}{\lvec{g}_{l,k}} &= \Mumsg{\lvec{g}_{l,k}}{\Psi_{z_{l,kt}}} = \tilde{\gvec{\mu}}_{h_{l,k}},
	\label{eq:mu_g_Psi_z_init}\\
	\Cmsg{\Psi_{g_{l,k}}}{\lvec{g}_{l,k}} &= \Cmsg{\lvec{g}_{l,k}}{\Psi_{z_{l,kt}}} = \tilde{\lmat{C}}_{h_{l,k}}.
	\label{eq:C_g_Psi_z_init}
\end{align}
All other messages are initialized using uninformative priors, namely, uniform distributions for messages involving categorical variables, and zero-mean, zero-precision distributions for messages involving Gaussian variables where the precision matrix is the inverse of the covariance matrix.
The uninformative \ac{BG} distribution exhibits an activity probability equal to 0.5 and a zero-mean, zero-precision Gaussian component.

After initialization, messages are updated according to the scheduling defined in Algorithm~\ref{alg:JACD-EP-BG} and illustrated in Fig.~\ref{fig:FG} by the red dashed arrows.
\begin{algorithm}[t]
\caption{\acs{JACD-EP}$^*$ and \acs{JACD-EP-BG}$^\dagger$ Algorithm}
\begin{algorithmic}[1]
\renewcommand{\algorithmicrequire}{\textbf{Input:}}
\renewcommand{\algorithmicensure}{\textbf{Output:}}
\REQUIRE Pilot matrix $\pilot{\lmat{X}}$, transmit power $\sigma_x^2$, received signal $\lmat{Y}$, noise variance $\sigma_n^2$, prior distributions on user activities $\tilde{p}_{u_k}(u_k)$ and channels $\tilde{p}_{h_{l,k}}(\lvec{h}_{l,k})$.
\ENSURE Estimated activities $\hat{u}_k$, channels $\hat{\lmat{h}}_{l,k}$, and data $\hat{x}_{kt}$.\hspace{-1mm}
\STATE $\forall k,l,t$: Initialize $\msg{\Psi_{z_{l,kt}}}{\lvec{z}_{l,kt}}$ \eqref{eq:mu_Psi_z_z_init} \eqref{eq:C_Psi_z_z_init}.
\STATE $\forall k,l,t$: Initialize $\msg{\lvec{g}_{l,k}}{\Psi_{z_{l,kt}}}$ \{\eqref{eq:mu_g_Psi_z_init_} \eqref{eq:C_g_Psi_z_init_}\}$^*$ \{\eqref{eq:act_g_Psi_z_init} - \eqref{eq:C_g_Psi_z_init}\}$^\dagger$.\hspace{-1mm}
\STATE $\forall k,l$: Initialize $\msg{\Psi_{g_{l,k}}}{\lvec{g}_{l,k}}$ \{\eqref{eq:mu_g_Psi_z_init_} \eqref{eq:C_g_Psi_z_init_}\}$^*$ \{\eqref{eq:act_g_Psi_z_init} - \eqref{eq:C_g_Psi_z_init}\}$^\dagger$.
\FOR {$i = 1$ to $i_\text{max}$}
\STATE $\forall k,l,t$: Update $\msg{\Psi_{y_{l,t}}}{\lvec{z}_{l,kt}}$ \eqref{eq:mu_Psi_y_z} \eqref{eq:C_Psi_y_z}.
\STATE $\forall k,l,t\!>\!T_p$: Update $\msg{\Psi_{z_{l,kt}}}{x_{kt}}$ \eqref{eq:m_Psi_z_x}.\label{alg_line:m_Psi_z_x}
\STATE $\forall k,l,t\!>\!T_p$: Update $\msg{x_{kt}}{\Psi_{z_{l,kt}}}$ \eqref{eq:m_x_Psi_z}.
\STATE $\forall k,l,t$: Update $\msg{\Psi_{z_{l,kt}}}{\lvec{g}_{l,k}}$ \{\eqref{eq:mu_Psi_z_g} \eqref{eq:C_Psi_z_g} via \eqref{eq:mu_mmd_Psi_z_g_} \eqref{eq:C_mmd_Psi_z_g_}\}$^*$ \{\eqref{eq:kappa_Psi_z_g} - \eqref{eq:C_Psi_z_g} via \eqref{eq:act_mmd_Psi_z_g} - \eqref{eq:Lambda_mmd_Psi_z_g}\}$^\dagger$.\label{alg_line:m_Psi_z_g}\hspace{-1mm}
\STATE $\forall k,l$: Update $\msg{\lvec{g}_{l,k}}{\Psi_{g_{l,k}}}$ \{\eqref{eq:mu_g_Psi_g} \eqref{eq:C_g_Psi_g}\}$^*$ \{\eqref{eq:kappa_g_Psi_g} - \eqref{eq:C_g_Psi_g}\}$^\dagger$.
\STATE $\forall k,l$: Update $\msg{\Psi_{g_{l,k}}}{u_k}$ \eqref{eq:m_Psi_g_u_}$^*$ \eqref{eq:m_Psi_g_u}$^\dagger$.\label{alg_line:m_Psi_g_u}
\STATE $\forall k,l$: Update $\msg{u_k}{\Psi_{g_{l,k}}}$ \eqref{eq:m_u_Psi_g}.
\STATE $\forall k,l$: Update $\msg{\Psi_{g_{l,k}}}{\lvec{g}_{l,k}}$ \{\eqref{eq:mu_Psi_g_g_} \eqref{eq:C_Psi_g_g_}\}$^*$ \{\eqref{eq:act_Psi_g_g} - \eqref{eq:C_Psi_g_g}\}$^\dagger$.\label{alg_line:m_Psi_g_g}
\STATE $\forall k,l,t$: Update $\msg{\lvec{g}_{l,k}}{\Psi_{z_{l,kt}}}$ \{\eqref{eq:mu_g_Psi_z} \eqref{eq:C_g_Psi_z}\}$^*$ \{\eqref{eq:kappa_g_Psi_z} - \eqref{eq:C_g_Psi_z}\}$^\dagger$.\hspace{-1mm}
\STATE $\forall k,l,t$: Update $\msg{\Psi_{z_{l,kt}}}{\lvec{z}_{l,kt}}$ \eqref{eq:mu_Psi_z_z} \eqref{eq:C_Psi_z_z}.\label{alg_line:m_Psi_z_z}
\ENDFOR
\RETURN $\hat{u}_k$ \eqref{eq:estimate_u} $\forall k$.
\RETURN $\hat{\lvec{h}}_{l,k}$ \eqref{eq:estimate_h_explicit} $\forall k,l$.
\RETURN $\hat{x}_{kt}$ \eqref{eq:estimate_x} $\forall k,{t>T_p}$.
\end{algorithmic} 
\label{alg:JACD-EP-BG}
\end{algorithm}
Algorithm~\ref{alg:JACD-EP-BG} jointly describes both proposed methods which follow the same scheduling and differ only in the message updates.
We group update rules for the two algorithms by brackets with different superscripts, i.e., $\{\cdot\}^*$ for \ac{JACD-EP} and $\{\cdot\}^\dagger$ for \ac{JACD-EP-BG}.
Message updates without brackets are common to both algorithms.

\vspace*{-2mm}
\subsection{Message-Passing Update Rules}\label{subsec:MP_updates}
In this section, we present the \ac{EP} message-passing rules used in the proposed algorithms.
Detailed derivations for the \ac{JACD-EP} and the \ac{JACD-EP-BG} algorithm can be found in the extended version of~\cite{Forsch2024}\footnote{https://arxiv.org/abs/2405.09914} and Appendix~\ref{app:MP_updates}, respectively.
For a \ac{BG} random variable, the activity probability $\lambda$, the mean vector $\gvec{\mu}$, and the covariance matrix $\lmat{C}$ can be readily expressed by the natural parameters $\kappa=\log\frac{1-\lambda}{\lambda}+\gvec{\mu}\herm\lmat{C}^{-1}\gvec{\mu}+\log|\pi\lmat{C}|$, $\gvec{\gamma}\!=\!\lmat{C}^{-1}\gvec{\mu}$, and $\gmat{\Lambda}\!=\!\lmat{C}^{-1}$.
The same relations hold for the parameters $\gvec{\mu}$ and $\lmat{C}$ and natural parameters $\gvec{\gamma}$ and $\gmat{\Lambda}$ of a Gaussian random variable.
In the following, we freely switch between these two representations without explicitly mentioning the transformation; whenever $(\Actmsg{\Psi_\alpha}{\lvec{x}_\beta},\Mumsg{\Psi_\alpha}{\lvec{x}_\beta},\Cmsg{\Psi_\alpha}{\lvec{x}_\beta})$ are computed, the corresponding $(\Kappamsg{\Psi_\alpha}{\lvec{x}_\beta},\Gammamsg{\Psi_\alpha}{\lvec{x}_\beta},\Lambdamsg{\Psi_\alpha}{\lvec{x}_\beta})$ follow automatically and vice versa.

In the first step, the messages $\msg{\Psi_{y_{l,t}}}{\lvec{z}_{l,kt}}$ $\forall k,l,t$ are updated.
These updates compute the Gaussian beliefs of the auxiliary variables $\lvec{z}_{l,kt}$ based on the observation $\lvec{y}_{l,t}$.
In practice, the updating rules perform soft interference cancellation using the current estimated interference $\Mumsg{\Psi_{z_{l,k't}}}{\lvec{z}_{l,k't}}$ of all other users $k'\neq k$ and modifying the covariance matrix accordingly.
The message mean and covariance are
\begin{align}
    \Mumsg{\Psi_{y_{l,t}}}{\lvec{z}_{l,kt}} &= \lvec{y}_{l,t}-\sum_{k'\neq k}\Mumsg{\Psi_{z_{l,k't}}}{\lvec{z}_{l,k't}},
    \label{eq:mu_Psi_y_z}\\
    \Cmsg{\Psi_{y_{l,t}}}{\lvec{z}_{l,kt}} &= \sigma_n^2\lmat{I}_N+\sum_{k'\neq k}\Cmsg{\Psi_{z_{l,k't}}}{\lvec{z}_{l,k't}}.
    \label{eq:C_Psi_y_z}
\end{align}

These updated beliefs are used to generate the local categorical data symbol beliefs $\msg{\Psi_{z_{l,kt}}}{x_{kt}}$ $\forall k,l,t>T_p$ at each \ac{AP}.
The message update is derived by evaluating how well the product $\lvec{g}_{l,k}x_{kt}$ matches the relation $\lvec{z}_{l,kt}=\lvec{g}_{l,k}x_{kt}$ which is obtained by sampling a Gaussian likelihood $\theta(x_{kt})$,\footnote{Note that $\theta(x_{kt})$ provides unnormalized probability values and $\catmsg{\Psi_{z_{l,kt}}}{x_{kt}}(x_{kt})$ is obtained by normalization.}
\begin{equation}
    \catmsg{\Psi_{z_{l,kt}}}{x_{kt}}(x_{kt}) \propto \theta(x_{kt}),
    \label{eq:m_Psi_z_x}
\end{equation}
with
\begin{equation}
\begin{split}
    \theta(x_{kt}) = \mathcal{CN}\big(\lvec{0}|&\Mumsg{\Psi_{y_{l,t}}}{\lvec{z}_{l,kt}}-\Mumsg{\lvec{g}_{l,k}}{\Psi_{z_{l,kt}}}x_{kt},\\
 &\Cmsg{\Psi_{y_{l,t}}}{\lvec{z}_{l,kt}}+\Cmsg{\lvec{g}_{l,k}}{\Psi_{z_{l,kt}}}|x_{kt}|^2\big).
\end{split}
    \label{eq:theta_mmd_Psi_z_z}
\end{equation}

Then, all local data symbol beliefs are combined at the \ac{CPU}.
The message update $\msg{x_{kt}}{\Psi_{z_{l,kt}}}$ $\forall k,l,$ $ t>T_p$ combines the categorical beliefs of the data symbol $x_{kt}$ from all \acp{AP} $l'\neq l$ and forwards them to \ac{AP} $l$,
\begin{equation}
    \catmsg{x_{kt}}{\Psi_{z_{l,kt}}}(x_{kt}) \propto \prod_{l'\neq l}\catmsg{\Psi_{z_{l',kt}}}{x_{kt}}(x_{kt}).
    \label{eq:m_x_Psi_z}
\end{equation}

The updated beliefs of the variables $\lvec{z}_{l,kt}$ and $x_{kt}$ are used to update the messages $\msg{\Psi_{z_{l,kt}}}{\lvec{g}_{l,k}}$ $\forall k,l,t$, yielding the beliefs of the auxiliary variable $\lvec{g}_{l,k}$.
The \ac{EP} approach first generates the local estimate of $\lvec{g}_{l,k}$ at the factor node $\Psi_{z_{l,kt}}$ and then removes the knowledge induced by the message $\msg{\lvec{g}_{l,k}}{\Psi_{z_{l,kt}}}$, 
\vspace*{-1mm}
\begin{align}
    \Kappamsg{\Psi_{z_{l,kt}}}{\lvec{g}_{l,k}} &= \Kappamsgb{\lvec{g}_{l,kt}}{}-\Kappamsg{\lvec{g}_{l,k}}{\Psi_{z_{l,kt}}},
    \label{eq:kappa_Psi_z_g}\\
    \Gammamsg{\Psi_{z_{l,kt}}}{\lvec{g}_{l,k}} &= \Gammamsgb{\lvec{g}_{l,kt}}{}-\Gammamsg{\lvec{g}_{l,k}}{\Psi_{z_{l,kt}}},
    \label{eq:mu_Psi_z_g}\\
    \Lambdamsg{\Psi_{z_{l,kt}}}{\lvec{g}_{l,k}} &= \Lambdamsgb{\lvec{g}_{l,kt}}{}-\Lambdamsg{\lvec{g}_{l,k}}{\Psi_{z_{l,kt}}}.
    \label{eq:C_Psi_z_g}
\end{align}
Note that~\eqref{eq:kappa_Psi_z_g} is computed only for the \ac{JACD-EP-BG} algorithm, whereas~\eqref{eq:mu_Psi_z_g} and~\eqref{eq:C_Psi_z_g} are evaluated for both algorithms.
The computation of the parameters describing the local estimate of $\lvec{g}_{l,k}$ at $\Psi_{z_{l,kt}}$, i.e., $\Kappamsgb{\lvec{g}_{l,kt}}{}$, $\Gammamsgb{\lvec{g}_{l,kt}}{}$, and $\Lambdamsgb{\lvec{g}_{l,kt}}{}$, differs for the \ac{JACD-EP} and the \ac{JACD-EP-BG} algorithm due to the different modeling of $\lvec{g}_{l,k}$.
For the \ac{JACD-EP} algorithm, we compute the Gaussian parameters
\vspace*{-1mm}
\begin{align}
    \Mumsgb{\lvec{g}_{l,kt}}{} &= \frac{1}{{Z}_{\Psi_{z_{l,kt}}}}\sum_{x_{kt}\in\tilde{\mathcal{X}}}\frac{\phi(x_{kt})}{x_{kt}}\cdot\Mumsga{\lvec{z}_{l,kt}}{}(x_{kt}),
	\label{eq:mu_mmd_Psi_z_g_}\\
    \begin{split}
	\Cmsgb{\lvec{g}_{l,kt}}{} &= \frac{1}{{Z}_{\Psi_{z_{l,kt}}}}\sum_{x_{kt}\in\tilde{\mathcal{X}}}\frac{\phi(x_{kt})}{|x_{kt}|^2}\cdot\big(\Cmsga{\lvec{z}_{l,kt}}{}(x_{kt})\\
    &\qquad+\Mumsga{\lvec{z}_{l,kt}}{}(x_{kt})\cdot\Mumsga{\lvec{z}_{l,kt}}{}\herm(x_{kt})\big) - \Mumsgb{\lvec{g}_{l,kt}}{}\Mumsgb{\lvec{g}_{l,kt}}{}\herm,
    \end{split}
    \label{eq:C_mmd_Psi_z_g_}
\end{align}
with $\tilde{\mathcal{X}}=\{\pilot{x}_{kt}\}$ for $t\leq T_p$, $\tilde{\mathcal{X}}=\mathcal{X}$ for $t>T_p$, and
\begin{align}
    \phi(x_{kt}) &= \catmsg{x_{kt}}{\Psi_{z_{l,kt}}}(x_{kt})\cdot\theta(x_{kt}),
    \label{eq:phi_mmd_Psi_z}\\
	\Gammamsga{\lvec{z}_{l,kt}}{}(x_{kt}) &= \Gammamsg{\Psi_{y_{l,t}}}{\lvec{z}_{l,kt}}+\Gammamsg{\lvec{g}_{l,k}}{\Psi_{z_{l,kt}}}\frac{x_{kt}}{|x_{kt}|^2},
	\label{eq:mu_tmp_mmd_Psi_z_z}\\
    \Lambdamsga{\lvec{z}_{l,kt}}{}(x_{kt}) &= \Lambdamsg{\Psi_{y_{l,t}}}{\lvec{z}_{l,kt}}+\Lambdamsg{\lvec{g}_{l,k}}{\Psi_{z_{l,kt}}}|x_{kt}|^{-2},
	\label{eq:C_tmp_mmd_Psi_z_z}\\
    {Z}_{\Psi_{z_{l,kt}}} &= \sum_{x_{kt}\in\tilde{\mathcal{X}}}\phi(x_{kt}).
	\label{eq:Z_Psi_z}
\end{align}
Note that $\catmsg{x_{kt}}{\Psi_{z_{l,kt}}}(\pilot{x}_{kt})=1$ for $t\leq T_p$ since the pilot symbol is known a priori.
Hence, the updates for the pilot part simplify to $\Mumsgb{\lvec{g}_{l,kt}}{}=\Mumsga{\lvec{z}_{l,kt}}{}(\pilot{x}_{kt})/\pilot{x}_{kt}$ and $\Cmsgb{\lvec{g}_{l,kt}}{}=\Cmsga{\lvec{z}_{l,kt}}{}(\pilot{x}_{kt})/|\pilot{x}_{kt}|^2$.
For the \ac{JACD-EP-BG} algorithm, we compute the parameters of the local \ac{BG} estimate of $\lvec{g}_{l,k}$ as
\vspace*{-2mm}
\begin{align}
    \Actmsgb{\lvec{g}_{l,kt}}{} &= \frac{\Actmsg{\lvec{g}_{l,k}}{\Psi_{z_{l,kt}}}\cdot\phi(x^*)}{\Actmsg{\lvec{g}_{l,k}}{\Psi_{z_{l,kt}}}\cdot\phi(x^*)+(1-\Actmsg{\lvec{g}_{l,k}}{\Psi_{z_{l,kt}}})\cdot\theta(0)},
    \label{eq:act_mmd_Psi_z_g}\\
    \Gammamsgb{\lvec{g}_{l,kt}}{} &= \Gammamsg{\Psi_{y_{l,t}}}{\lvec{z}_{l,kt}}\frac{|x^*|^2}{x^*}+\Gammamsg{\lvec{g}_{l,k}}{\Psi_{z_{l,kt}}},
    \label{eq:gamma_mmd_Psi_z_g}\\
    \Lambdamsgb{\lvec{g}_{l,kt}}{} &= \Lambdamsg{\Psi_{y_{l,t}}}{\lvec{z}_{l,kt}}|x^*|^2+\Lambdamsg{\lvec{g}_{l,k}}{\Psi_{z_{l,kt}}},
    \label{eq:Lambda_mmd_Psi_z_g}
\end{align}
with $x^*=\pilot{x}_{kt}$ for $t\leq T_p$ and
\vspace*{-1mm}
\begin{equation}
    x^* = \argmax_{x_{kt}\in\mathcal{X}}\phi(x_{kt}),
    \label{eq:x_star_mmd_Psi_z_g}
\end{equation}
for $t>T_p$.
Note that using~\eqref{eq:gamma_mmd_Psi_z_g} and~\eqref{eq:Lambda_mmd_Psi_z_g} in~\eqref{eq:mu_Psi_z_g} and~\eqref{eq:C_Psi_z_g}, the message updates simplify to $\Gammamsg{\Psi_{z_{l,kt}}}{\lvec{g}_{l,k}} = \Gammamsg{\Psi_{y_{l,t}}}{\lvec{z}_{l,kt}}|x^*|^2/x^*$ and $\Lambdamsg{\Psi_{z_{l,kt}}}{\lvec{g}_{l,k}} = \Lambdamsg{\Psi_{y_{l,t}}}{\lvec{z}_{l,kt}}|x^*|^2$, respectively.
Furthermore, note that we deviate here from the classical \ac{EP} message-passing rule by computing the local estimate of $\lvec{g}_{l,k}$ using the most likely symbol $x^*$ instead of averaging across all $x_{kt}\in\mathcal{X}$.
This is necessary in order to prevent the Gaussian part of the resulting \ac{BG} belief to be close to zero which can lead to a high number of false alarms.

Next, the messages $\msg{\lvec{g}_{l,k}}{\Psi_{g_{l,k}}}$ $\forall k,l$ are updated by combining the beliefs of $\lvec{g}_{l,k}$ from all time slots $t\in\{1,\dots,T\}$ to generate the updated belief parameters
\vspace*{-1mm}
\begin{align}
    \Kappamsg{\lvec{g}_{l,k}}{\Psi_{g_{l,k}}} &= \sum_{t=1}^T\Kappamsg{\Psi_{z_{l,kt}}}{\lvec{g}_{l,k}},
    \label{eq:kappa_g_Psi_g}\\
    \Gammamsg{\lvec{g}_{l,k}}{\Psi_{g_{l,k}}} &= \sum_{t=1}^T\Gammamsg{\Psi_{z_{l,kt}}}{\lvec{g}_{l,k}},
    \label{eq:mu_g_Psi_g}\\
    \Lambdamsg{\lvec{g}_{l,k}}{\Psi_{g_{l,k}}} &= \sum_{t=1}^T\Lambdamsg{\Psi_{z_{l,kt}}}{\lvec{g}_{l,k}}.
    \label{eq:C_g_Psi_g}
\end{align}
Similar as for the message $\msg{\Psi_{z_{l,kt}}}{\lvec{g}_{l,k}}$,~\eqref{eq:kappa_g_Psi_g} is computed only for the \ac{JACD-EP-BG} algorithm, whereas~\eqref{eq:mu_g_Psi_g} and~\eqref{eq:C_g_Psi_g} are evaluated for both algorithms.

The updated beliefs of $\lvec{g}_{l,k}$ are then used to compute the local activity probabilities $\msg{\Psi_{g_{l,k}}}{u_k}$ $\forall k,l$ at each \ac{AP}.
Here, the update differs for the two proposed algorithms.
The categorical belief for the \ac{JACD-EP} algorithm is computed by evaluating how well the Gaussian belief of $\lvec{g}_{l,k}$ matches with the prior of the channel $\lvec{h}_{l,k}$,
\vspace*{-1mm}
\begin{equation}
	\catmsg{\Psi_{g_{l,k}}}{u_k}(u_k) \propto \vartheta(u_k),
	\label{eq:m_Psi_g_u_}\vspace*{-1mm}
\end{equation}
with
\begin{equation}
	\vartheta(u_k) = \mathcal{CN}\big(\lvec{0}|\Mumsg{\lvec{g}_{l,k}}{\Psi_{g_{l,k}}}\!\!-\tilde{\gvec{\mu}}_{h_{l,k}}u_k,\Cmsg{\lvec{g}_{l,k}}{\Psi_{g_{l,k}}}\!\!+\tilde{\lmat{C}}_{h_{l,k}}u_k\big).
	\label{eq:theta_mmd_Psi_g_u}
\end{equation}
For the \ac{JACD-EP-BG} algorithm, the activity belief incorporates both the Bernoulli and the Gaussian component of the \ac{BG} variable $\lvec{g}_{l,k}$,
\vspace*{-2mm}
\begin{equation}
	\catmsg{\Psi_{g_{l,k}}}{u_k}(u_k) \propto \begin{cases}
	1-\Actmsg{\lvec{g}_{l,k}}{\Psi_{g_{l,k}}} & \text{for }u_k=0\\
	\Actmsg{\lvec{g}_{l,k}}{\Psi_{g_{l,k}}}\cdot\vartheta(1) & \text{for }u_k=1
	\end{cases}.
	\label{eq:m_Psi_g_u}
\end{equation}

Then, for both the proposed algorithms, the \ac{CPU} computes the message updates $\msg{u_k}{\Psi_{g_{l,k}}}$ $\forall k,l$ by combining the categorical beliefs of $u_k$ from all \acp{AP} $l'\neq l$ with the prior information and forwards them to \ac{AP} $l$,
\begin{equation}
    \catmsg{u_k}{\Psi_{g_{l,k}}}(u_k) \propto \tilde{p}_{u_k}(u_k)\cdot\prod_{l'\neq l}\catmsg{\Psi_{g_{l',k}}}{u_{k}}(u_k).
    \label{eq:m_u_Psi_g}
\end{equation}

The belief of $\lvec{g}_{l,k}$ is then updated in the message $\msg{\Psi_{g_{l,k}}}{\lvec{g}_{l,k}}$ $\forall k,l$ by combining the information from the Bernoulli variable $u_k$ and the Gaussian variable $\lvec{h}_{l,k}$.
For the \ac{JACD-EP} algorithm, this is achieved by computing the local Gaussian estimate of $\lvec{g}_{l,k}$ at the factor node $\Psi_{g_{l,k}}$ and then removing the contribution of $\msg{\lvec{g}_{l,k}}{\Psi_{g_{l,k}}}$,
\begin{align}
    \Gammamsg{\Psi_{g_{l,k}}}{\lvec{g}_{l,k}} &= \Gammamsgb{\lvec{g}_{l,k}}{}-\Gammamsg{\lvec{g}_{l,k}}{\Psi_{g_{l,k}}},
	\label{eq:mu_Psi_g_g_}\\
	\Lambdamsg{\Psi_{g_{l,k}}}{\lvec{g}_{l,k}} &= \Lambdamsgb{\lvec{g}_{l,k}}{}-\Lambdamsg{\lvec{g}_{l,k}}{\Psi_{g_{l,k}}},
	\label{eq:C_Psi_g_g_}
\end{align}
with
\begin{align}
	\Mumsgb{\lvec{g}_{l,k}}{} &= \frac{1}{{Z}_{\Psi_{g_{l,k}}}}\!\cdot\!\catmsg{u_k}{\Psi_{g_{l,k}}}\!(1)\cdot\vartheta(1)\cdot\Mumsga{\lvec{g}_{l,k}}{},\label{eq:mu_mmd_Psi_g_g_}
\end{align}
\begin{align}
    \begin{split}
	\Cmsgb{\lvec{g}_{l,k}}{} &= \frac{1}{{Z}_{\Psi_{g_{l,k}}}}\!\cdot\!\catmsg{u_k}{\Psi_{g_{l,k}}}\!(1)\cdot\vartheta(1)\cdot(\Cmsga{\lvec{g}_{l,k}}{}\!+\!\Mumsga{\lvec{g}_{l,k}}{}\Mumsga{\lvec{g}_{l,k}}{}\herm)\!\!\!\!\\
    &\qquad- \Mumsgb{\lvec{g}_{l,k}}{}\Mumsgb{\lvec{g}_{l,k}}{}\herm,
    \end{split}
    \label{eq:C_mmd_Psi_g_g_}
\end{align}
and
\vspace*{-2mm}
\begin{align}
    \Gammamsga{\lvec{g}_{l,k}}{} &= \Gammamsg{\lvec{g}_{l,k}}{\Psi_{g_{l,k}}}+\tilde{\gvec{\gamma}}_{h_{l,k}},\label{eq:mu_tmp_mmd_Psi_g_h_}\\
    \Lambdamsga{\lvec{g}_{l,k}}{} &= \Lambdamsg{\lvec{g}_{l,k}}{\Psi_{g_{l,k}}}+\tilde{\gmat{\Lambda}}_{h_{l,k}},\label{eq:C_tmp_mmd_Psi_g_h_}\\
    {Z}_{\Psi_{g_{l,k}}} &= \catmsg{u_k}{\Psi_{g_{l,k}}}(0)\cdot\vartheta(0)+\catmsg{u_k}{\Psi_{g_{l,k}}}(1)\cdot\vartheta(1).\label{eq:Z_Psi_g}
\end{align}
For the \ac{JACD-EP-BG} algorithm, the \ac{BG} belief of $\lvec{g}_{l,k}$ is given by
\vspace*{-2mm}
\begin{align}
	\Actmsg{\Psi_{g_{l,k}}}{\lvec{g}_{l,k}} &= \catmsg{u_k}{\Psi_{g_{l,k}}}(1),
	\label{eq:act_Psi_g_g}\\
    \Mumsg{\Psi_{g_{l,k}}}{\lvec{g}_{l,k}} &= \tilde{\gvec{\mu}}_{h_{l,k}},
	\label{eq:mu_Psi_g_g}\\
	\Cmsg{\Psi_{g_{l,k}}}{\lvec{g}_{l,k}} &= \tilde{\lmat{C}}_{h_{l,k}}.
	\label{eq:C_Psi_g_g}
\end{align}

Next, the messages $\msg{\lvec{g}_{l,k}}{\Psi_{z_{l,kt}}}$ $\forall k,l,t$ are updated by combining the previously updated belief of $\lvec{g}_{l,k}$ with the contributions from all time slots $t'\neq t$,
\begin{align}
    \Kappamsg{\lvec{g}_{l,k}}{\Psi_{z_{l,kt}}} &= \Kappamsg{\Psi_{g_{l,k}}}{\lvec{g}_{l,k}}+\sum_{t'\neq t}\Kappamsg{\Psi_{z_{l,kt'}}}{\lvec{g}_{l,k}},
    \label{eq:kappa_g_Psi_z}\\
    \Gammamsg{\lvec{g}_{l,k}}{\Psi_{z_{l,kt}}} &= \Gammamsg{\Psi_{g_{l,k}}}{\lvec{g}_{l,k}}+\sum_{t'\neq t}\Gammamsg{\Psi_{z_{l,kt'}}}{\lvec{g}_{l,k}},
    \label{eq:mu_g_Psi_z}\\
    \Lambdamsg{\lvec{g}_{l,k}}{\Psi_{z_{l,kt}}} &= \Lambdamsg{\Psi_{g_{l,k}}}{\lvec{g}_{l,k}}+\sum_{t'\neq t}\Lambdamsg{\Psi_{z_{l,kt'}}}{\lvec{g}_{l,k}}.
    \label{eq:C_g_Psi_z}
\end{align}
The parameter in~\eqref{eq:kappa_g_Psi_z} is computed only for the \ac{JACD-EP-BG} algorithm, whereas~\eqref{eq:mu_g_Psi_z} and~\eqref{eq:C_g_Psi_z} are common to both algorithms.

Lastly, the messages $\msg{\Psi_{z_{l,kt}}}{\lvec{z}_{l,kt}}$ $\forall k,l,t$ are updated to form the Gaussian belief of $\lvec{z}_{l,kt}$ which will be used in the next iteration for interference cancellation.
The factor node $\Psi_{z_{l,kt}}$ first computes the local estimate of $\lvec{z}_{l,kt}$ based on the beliefs of $\lvec{g}_{l,k}$ and $x_{kt}$ and then removes the contribution of $\msg{\lvec{z}_{l,kt}}{\Psi_{z_{l,kt}}}$,
\begin{align}
    \Gammamsg{\Psi_{z_{l,kt}}}{\lvec{z}_{l,kt}} &= \Gammamsgb{\lvec{z}_{l,kt}}{}-\Gammamsg{\Psi_{y_{l,t}}}{\lvec{z}_{l,kt}},
    \label{eq:mu_Psi_z_z}\\
    \Lambdamsg{\Psi_{z_{l,kt}}}{\lvec{z}_{l,kt}} &= \Lambdamsgb{\lvec{z}_{l,kt}}{}-\Lambdamsg{\Psi_{y_{l,t}}}{\lvec{z}_{l,kt}},
    \label{eq:C_Psi_z_z}
\end{align}
with the parameters describing the local estimate of $\lvec{z}_{l,kt}$,
\begin{align}
    \Mumsgb{\lvec{z}_{l,kt}}{} &= \frac{1}{{Z}_{{l,kt}}}\sum_{x_{kt}\in\tilde{\mathcal{X}}}\phi(x_{kt})\cdot\Mumsga{\lvec{z}_{l,kt}}{}(x_{kt}),
    \label{eq:mu_mmd_Psi_z_z}\\
    \begin{split}
    \Cmsgb{\lvec{z}_{l,kt}}{} &= \frac{1}{{Z}_{{l,kt}}}\sum_{x_{kt}\in\tilde{\mathcal{X}}}\phi(x_{kt})\cdot\big(\Cmsga{\lvec{z}_{l,kt}}{}(x_{kt})\\
    &\qquad+\Mumsga{\lvec{z}_{l,kt}}{}(x_{kt})\cdot\Mumsga{\lvec{z}_{l,kt}}{}\herm(x_{kt})\big) - \Mumsgb{\lvec{z}_{l,kt}}{}\Mumsgb{\lvec{z}_{l,kt}}{}\herm,
    \end{split}
    \label{eq:C_mmd_Psi_z_z}
\end{align}
with $\tilde{\mathcal{X}}=\{\pilot{x}_{kt}\}$ for $t\leq T_p$, $\tilde{\mathcal{X}}=\mathcal{X}$ for $t>T_p$, and the other parameters defined in~\eqref{eq:phi_mmd_Psi_z}-\eqref{eq:Z_Psi_z}.
Hence, the updates for $t\leq T_p$ simplify to $\Mumsgb{\lvec{z}_{l,kt}}{}\!\!=\!\Mumsga{\lvec{z}_{l,kt}}{}(\pilot{x}_{kt})$ and $\Cmsgb{\lvec{z}_{l,kt}}{}\!\!=\!\Cmsga{\lvec{z}_{l,kt}}{}(\pilot{x}_{kt})$.

Furthermore, to improve numerical stability and convergence, damping is applied in the updating of all factor-to-variable messages using a parameter $\eta\in[0,1]$ as in~\cite{Karataev2024,Ngo2020}: the updated parameter is a convex combination of its value from the previous iteration and the newly computed  value at the factor node.
Finally,  messages  involving a Gaussian component are updated only if the resulting matrix parameter is positive definite; otherwise, the corresponding parameters from the previous iteration are retained.

\vspace*{-2mm}
\subsection{Final Estimation}\label{subsec:estimation}
From the final \ac{EP} messages, the approximate \ac{APP} distributions  of the user activities, channels, and data are obtained as the product of all the incoming messages at the corresponding variable nodes. The final estimates are then given by the \ac{MAP} solution of these \acp{APP},
\vspace*{-2mm}
\begin{align}
	\hat{u}_k &= \argmax_{u_k\in\{0,1\}}\;\tilde{p}_{u_k}(u_k)\cdot\prod_{l=1}^L\catmsg{\Psi_{g_{l,k}}}{u_{k}}(u_k),
	\label{eq:estimate_u}\\
        \hat{\lvec{h}}_{l,k} &= \argmax_{\lvec{h}_{l,k}\in\Cset^N}\;\tilde{p}_{h_{l,k}}(\lvec{h}_{l,k})\cdot\msgp{\Psi_{g_{l,k}}}{\lvec{h}_{l,k}}(\lvec{h}_{l,k}),
	\label{eq:estimate_h}\\
        \hat{x}_{kt} &= \argmax_{x_{kt}\in\mathcal{X}}\;\prod_{l=1}^L\catmsg{\Psi_{z_{l,kt}}}{x_{kt}}(x_{kt})\qquad{\text{for }t>T_p}.
	\label{eq:estimate_x}
\end{align}
Assuming that \ac{UE} $k$ is active, the solution to~\eqref{eq:estimate_h}  is Gaussian, yielding the following closed-form estimate:
\begin{equation}
    \hat{\lvec{h}}_{l,k} = \hat{\lmat{\Lambda}}_{\lvec{h}_{l,k}}^{-1}\hat{\lvec{\gamma}}_{\lvec{h}_{l,k}},
    \label{eq:estimate_h_explicit}
\end{equation}
with
\vspace*{-2mm}
\begin{align}
    \hat{\lvec{\gamma}}_{\lvec{h}_{l,k}} &= \tilde{\gvec{\gamma}}_{h_{l,k}}+\Gammamsg{\lvec{g}_{l,k}}{\Psi_{g_{l,k}}},
    \label{eq:posterior_h_gamma}\\
    \hat{\lmat{\Lambda}}_{\lvec{h}_{l,k}} &= \tilde{\gmat{\Lambda}}_{h_{l,k}}+\Lambdamsg{\lvec{g}_{l,k}}{\Psi_{g_{l,k}}}.
    \label{eq:posterior_h_Lambda}
\end{align}

\vspace*{-2mm}
\subsection{Computational Complexity and Scalability}\label{subsec:complexity}
In the following, we characterize the  computational complexity order of the  proposed algorithms neglecting  addition and subtraction costs.
At the \acp{AP}, the dominant computational burden for both algorithms arises from  the evaluation of the Gaussian likelihood in~\eqref{eq:theta_mmd_Psi_z_z}, which requires computing the inverse and determinant of  $N-$dimensional covariance matrices.
This computation is performed $KT_d$ times for every constellation symbol $x_{kt}\in\mathcal{X}$ and exhibits a complexity of $\mathcal{O}(N^3)$.
Consequently, the per-iteration computational complexity is $\mathcal{O}(M N^3 K T_d)$  at every \ac{AP}.
From a network scalability perspective, the key observation is that the computational load at the \acp{AP}  scales linearly with the number of \acp{UE} $K$.
This dependency can be strongly mitigated by pruning \acp{UE} with negligible large-scale fading coefficients, i.e., \acp{UE} that are unlikely to contribute meaningfully to the received signal.
Such a modification reduces the effective number of \acp{UE} processed at each \ac{AP} without altering the core algorithmic structure.

The complexity at the \ac{CPU} for both algorithms is mainly determined by the combination of symbol beliefs in~\eqref{eq:m_x_Psi_z} which has a complexity order of $\mathcal{O}(LMKT_d)$.
Hence, the \ac{CPU} complexity scales linearly with both the number of \acp{UE} and \acp{AP}, which may limit scalability in very large networks.
Nevertheless, the processing of different symbols at the \ac{CPU} is fully decoupled, enabling straightforward parallel and/or distributed implementations that can effectively remove this scalability bottleneck.

\vspace*{-1mm}
\section{Numerical Results}\label{sec:sims}
\vspace*{-1mm}
\begin{figure*}[t]
    \centering
    \subfloat[\acs{DER}.]{\includegraphics[width=\plotwidth\textwidth]{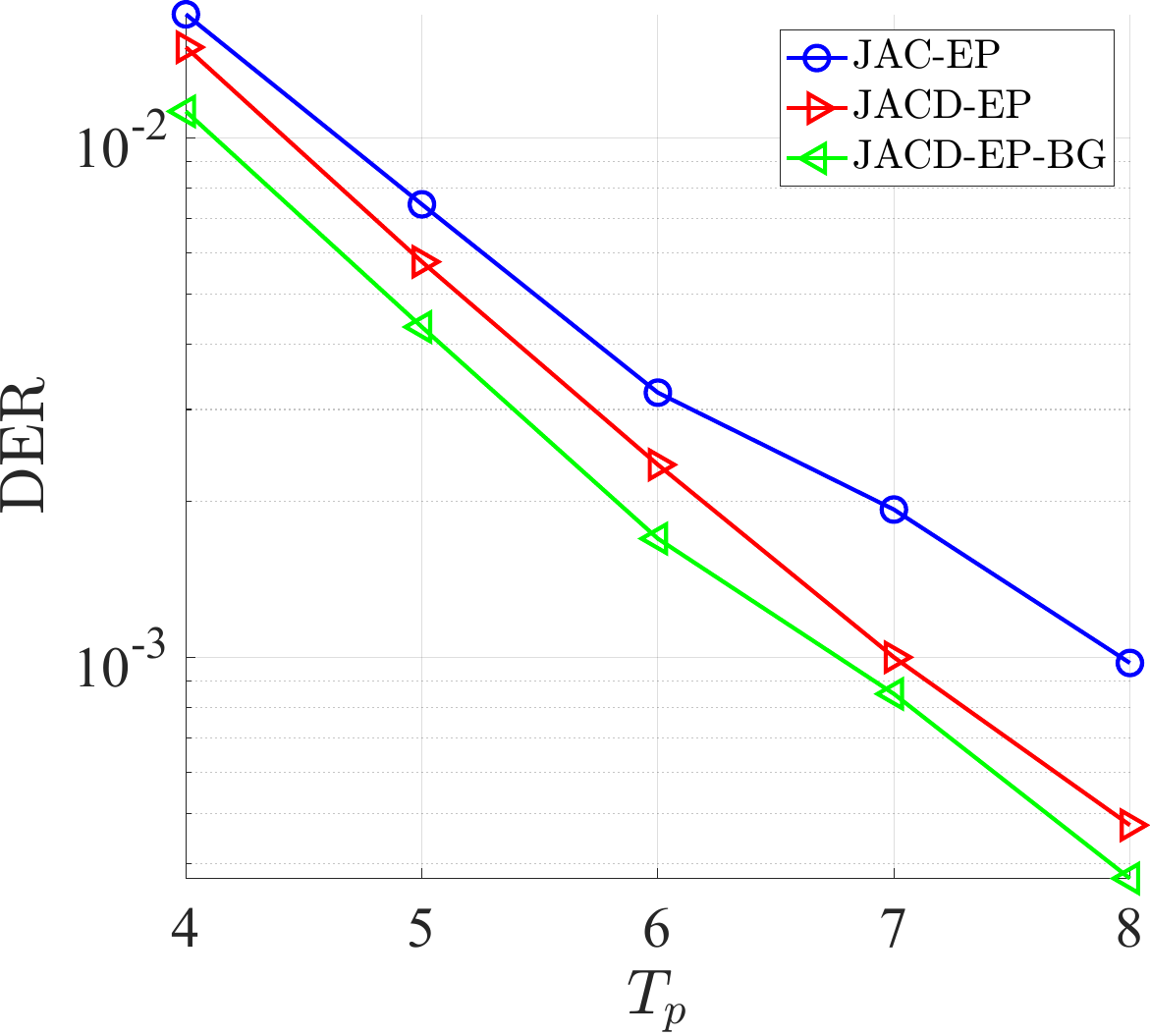}
        \label{fig:DER_Tp}}\hfill
    \subfloat[\acs{NMSE}.]{\includegraphics[width=\plotwidth\textwidth]{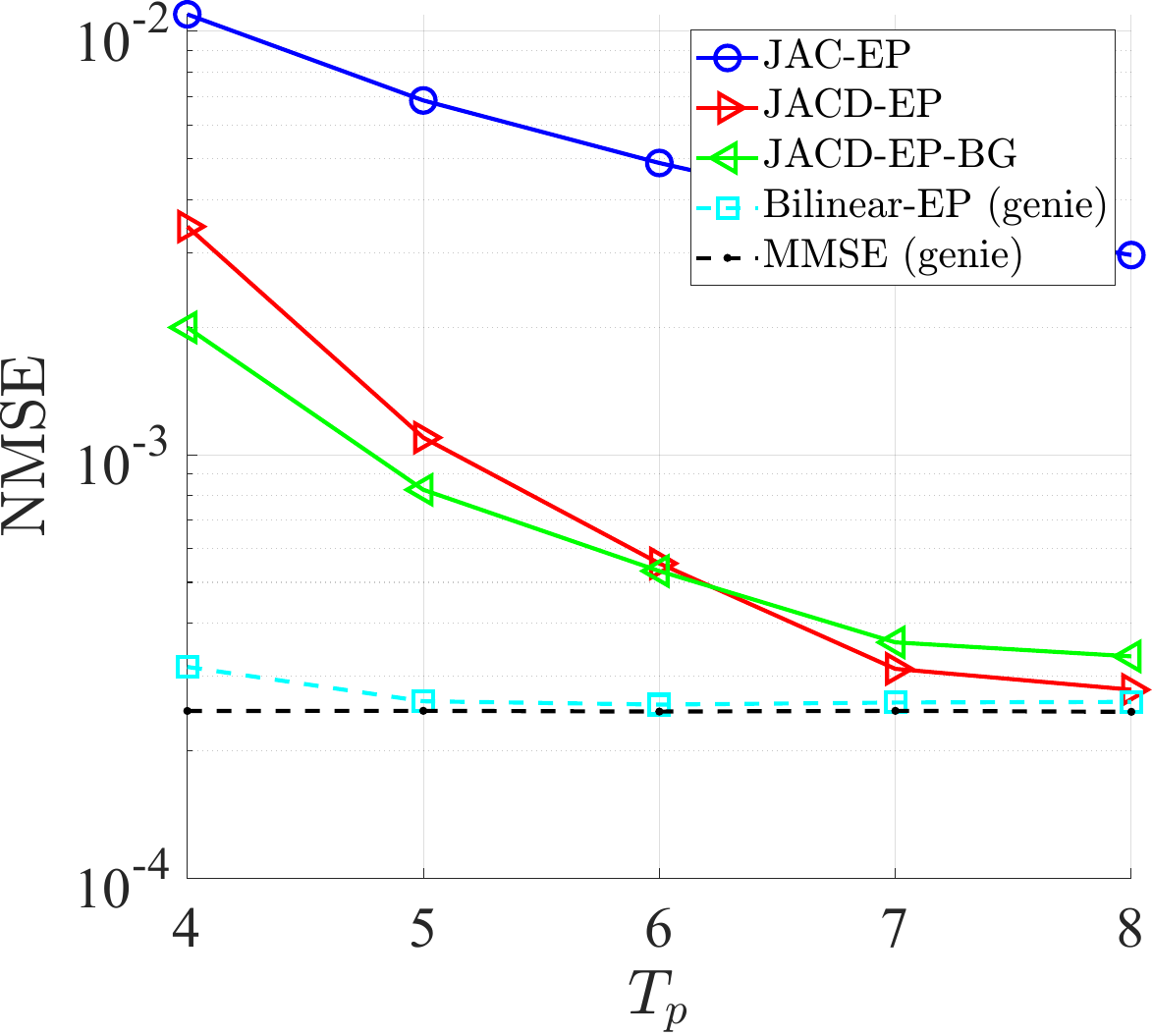}
        \label{fig:NMSE_Tp}}\hfill
    \subfloat[\acs{SER}.]{\includegraphics[width=\plotwidth\textwidth]{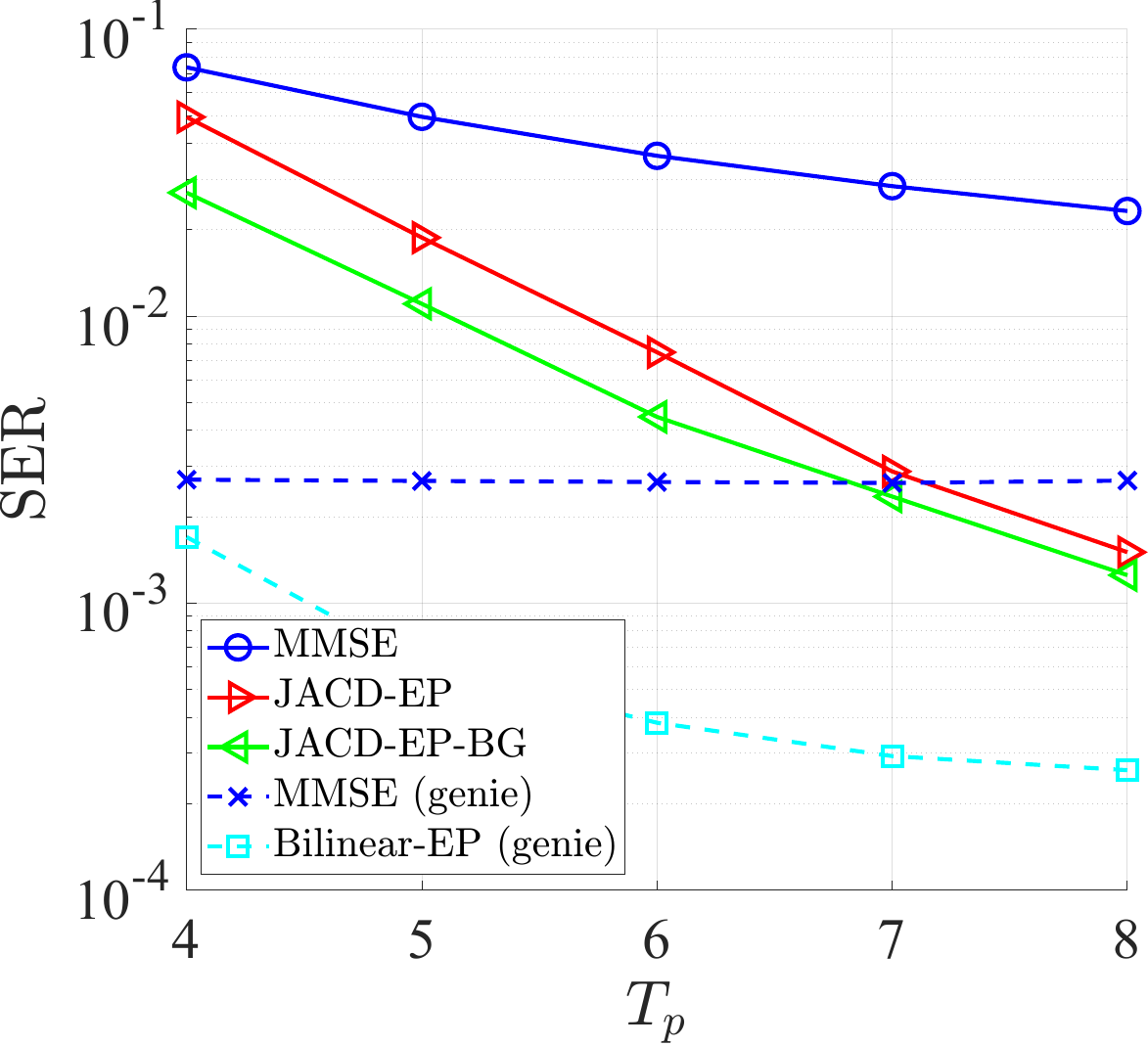}
        \label{fig:SER_Tp}}
    \caption{Performance metrics versus pilot sequence length for $L=25$, $N=1$, $K=40$, $\lambda=0.3$, and $T=60$.}
    \label{fig:results_Tp}
    \vspace*{-3mm}
\end{figure*}
In this section, we evaluate the performance of the proposed \ac{EP}-based \ac{JACD} algorithms via extensive Monte Carlo simulations.
We consider a square network area of $500\times500\,$m$^2$ comprising $L=25$ single-antenna \acp{AP}, i.e., $N=1$, placed on a uniform grid at the positions $\{(50+i\!\cdot\!100,50+j\!\cdot\!100)\,\text{m}\,|\,i,j\in\{0,1,2,3,4\}\}$ and a height of 10 m.
The receiver noise power at each \ac{AP} is set to $\sigma_n^2=-96\,$dBm.
A total of $K=40$ \acp{UE} are uniformly and independently placed at random ground locations, each with an activity probability $\lambda=0.3$.
An active \ac{UE} transmits $T_p$ pilot symbols and $T_d=T-T_p$ 4-\ac{QAM} data symbols with constant transmit power $\sigma_x^2=16\,$dBm.
The $T_p$ columns of the pilot matrix $\pilot{\lmat{X}}$ are chosen from the $K\times K$ \ac{DFT} matrix such that the maximum inner product between any two different pilot sequences is minimized, thereby minimizing the mutual coherence~\cite{Rusu2018}.
Due to the limited pilot sequence length and number of \acp{UE} in our simulations, the optimal pilot sequences can be found by a full search.
For larger systems, optimization-based pilot designs such as those proposed in~\cite{Rusu2018,Iimori2021} would be required.
The large-scale fading coefficients follow the 3GPP urban microcell model that incorporates correlated shadow fading~\cite[Sec.~2.5.2]{Demir2021}.
Furthermore, we assume a channel coherence time of $T=60$ which corresponds to $16\,$ms using a subcarrier spacing of $3.75\,$kHz as defined in narrowband-\ac{IoT} over 5G~\cite{Chen2017}.
This choice aligns with the channel coherence time of practical channels, especially low mobility use cases, and, at the same time, limits the number of jointly processed symbols, hence, reducing overall computational complexity and latency.
In practice, the number of jointly processed symbols can be chosen arbitrarily according to the complexity and delay requirements, provided that all processed symbols remain within the same channel coherence interval.
Note that in the simulation we implement sequentially an algorithm that should be implemented in parallel in a distributed setting.
Furthermore, the computations within the \ac{CPU} and each \ac{AP} can be parallelized as well.
Hence, the algorithm is still scalable even for larger systems which are, however, not considered here for limiting the simulation complexity.

Both the \ac{JACD-EP} and \ac{JACD-EP-BG} algorithm use the \ac{JAC-EP} algorithm from~\cite{Forsch2024} for pilot-based initialization.
For benchmarking, we consider the centralized linear \ac{MMSE} \ac{MIMO} detector~\cite{Bjoernson2020} using (1) imperfect activity and channel information obtained from the \ac{JAC-EP} algorithm and (2) genie-aided perfect activity and channel knowledge.
Additionally, we consider the genie-aided \ac{MMSE} channel estimator with perfectly known data symbols and the bilinear-\ac{EP} algorithm~\cite{Forsch2025} with perfect \ac{UE} activity knowledge and pilot-based \ac{MMSE} channel estimation initialization as performance upper bounds.
All the iterative \ac{EP}-based algorithms employ a damping parameter of $\eta=0.5$ and perform 20 iterations.

\begin{figure*}[t]
    \centering
    \subfloat[\acs{DER}.]{\includegraphics[width=\plotwidth\textwidth]{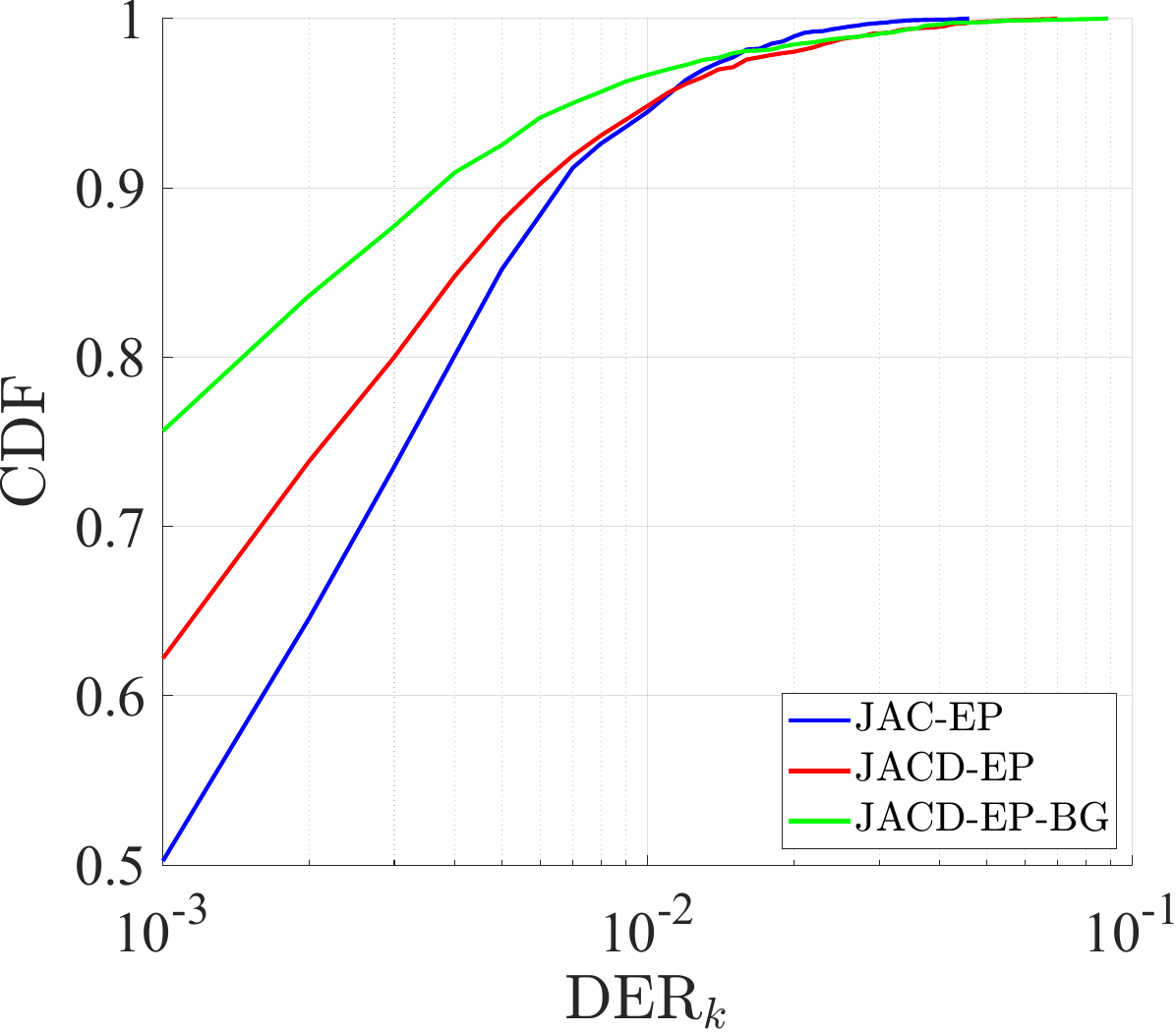}
        \label{fig:DER_CDF}}\hfill
    \subfloat[\acs{NMSE}.]{\includegraphics[width=\plotwidth\textwidth]{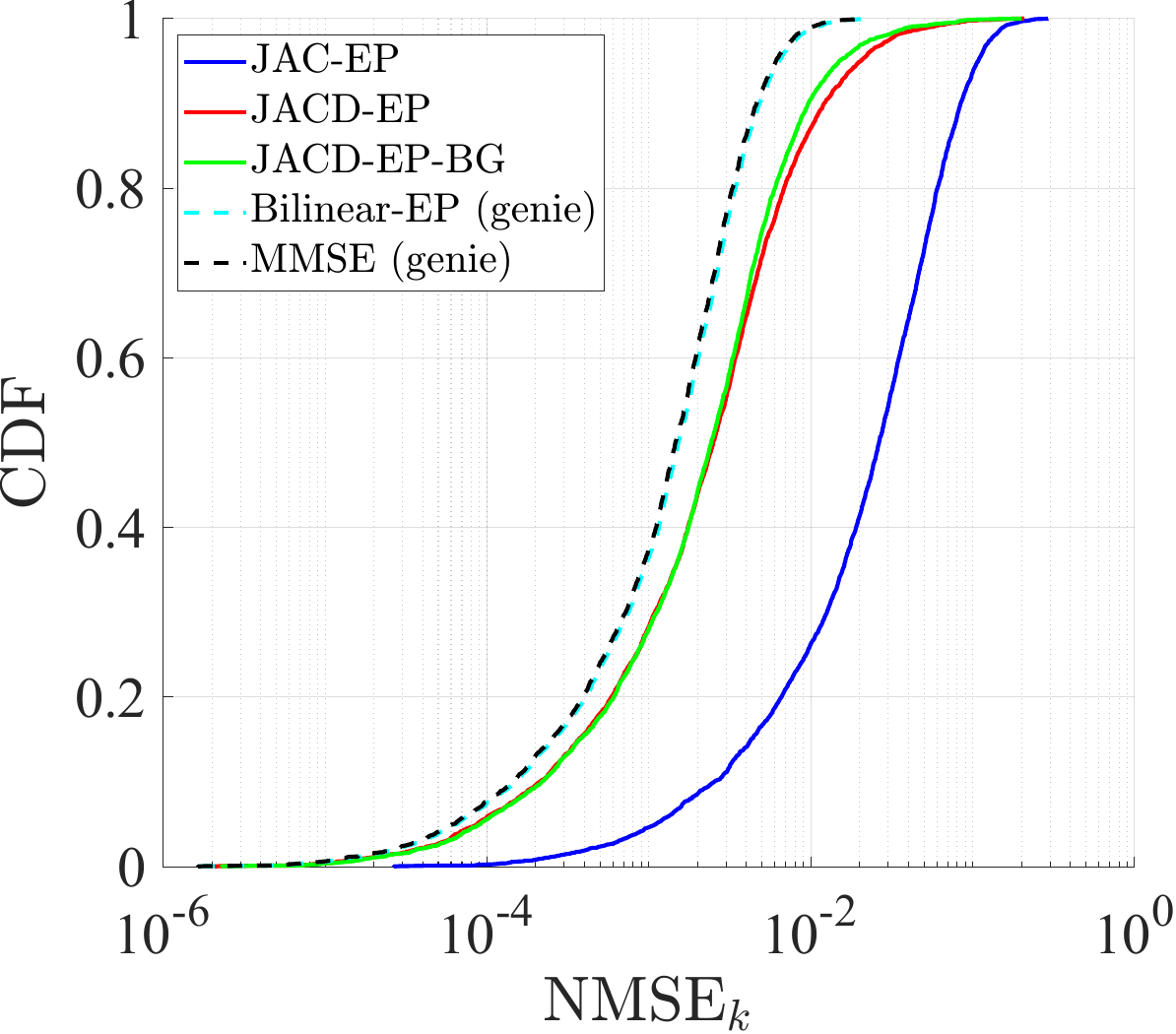}
        \label{fig:NMSE_CDF}}\hfill
    \subfloat[\acs{SER}.]{\includegraphics[width=\plotwidth\textwidth]{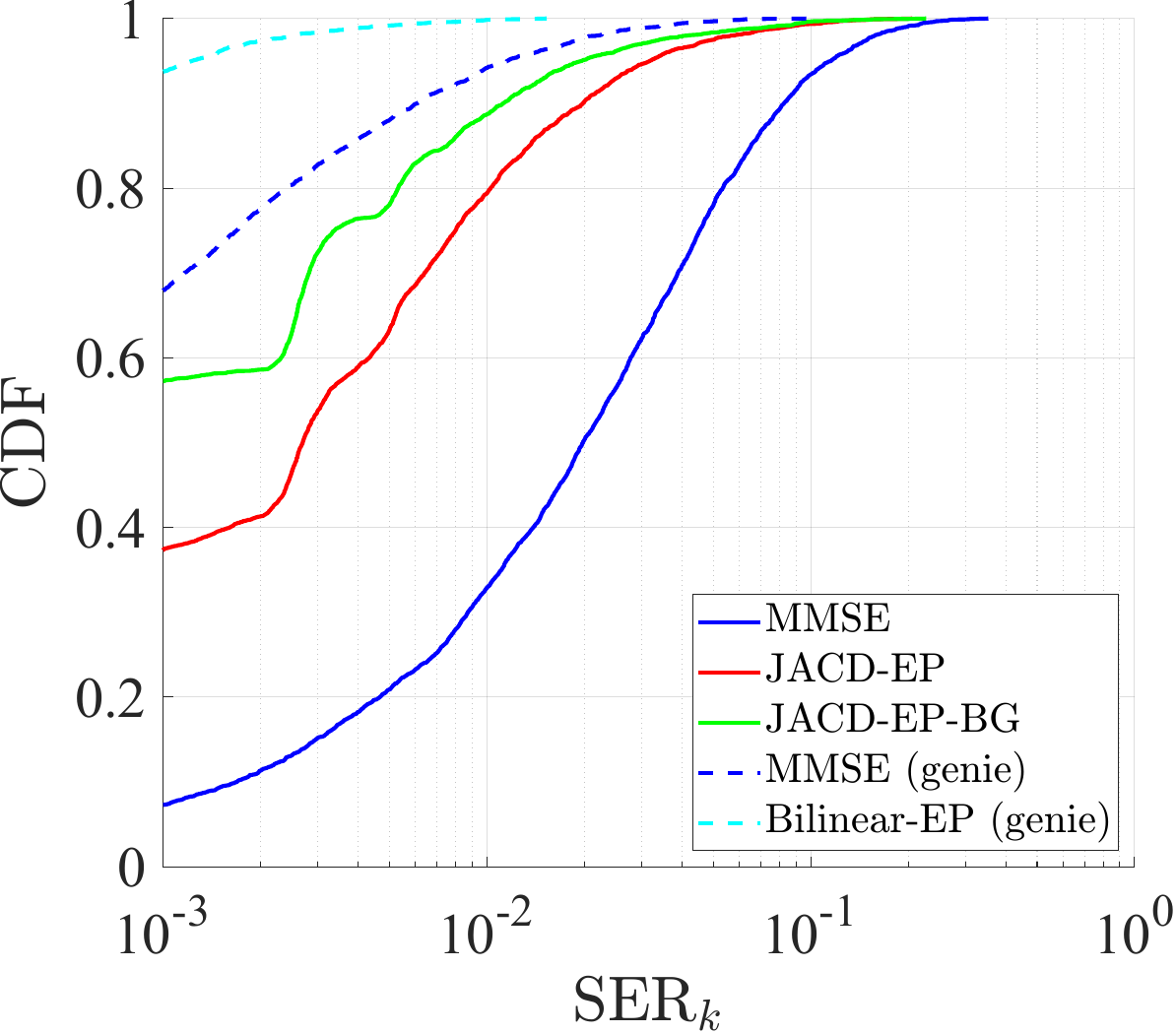}
        \label{fig:SER_CDF}}
    \caption{\acsp{CDF} of performance metrics for $L=25$, $N=1$, $K=40$, $\lambda=0.3$, $T=60$, and $T_p=6$.}
    \label{fig:results_CDF}
    \vspace*{-3mm}
\end{figure*}

The performance is assessed in terms of the \ac{DER} for \ac{UE} activity detection, the \ac{NMSE} for channel estimation, and the \ac{SER} for data detection.
The performance metrics are averaged across all \acp{UE}.
The \ac{DER}, which incorporates misdetections and false alarms, is computed as $\text{DER}\coloneq\est{}{\sum_k\ind{u_k\neq\hat{u}_k} / K}$, the \ac{NMSE} of the channel estimate $\hat{\lmat{G}}=\hat{\lmat{H}}\hat{\lmat{U}}$ is defined as $\text{NMSE}\coloneq\frac{\est{}{||\lmat{G}-\hat{\lmat{G}}||_F^2}}{\est{}{||\lmat{G}||_F^2}}$ with $\lmat{G}=\lmat{H}\lmat{U}$, and the \ac{SER} is given by $\text{SER}\coloneq\est{}{\sum_k\sum_t\ind{\data{x}_{kt}\neq\data{\hat{x}}_{kt}} / (KT_d)\,|\,u_k=1}$.
Note that as active \acp{UE} are erroneously classified as inactive, the corresponding data symbol estimates are randomly chosen from the symbol constellation $\mathcal{X}$, leading to a higher \ac{SER} in case of misdetections.
The expectation operator in the definitions above is computed with respect to the \ac{UE} activity, channel, and noise realizations.
The results in Fig.~\ref{fig:results_Tp} are obtained by averaging over $10^3$ block transmissions, each with independent \ac{UE} positions and activities.
Fig.~\ref{fig:results_Tp} illustrates the three performance metrics as a function of the pilot sequence length $T_p$, which reflects the strength of \ac{PC}.
The proposed algorithms outperform the \ac{JAC-EP} initialization algorithm in terms of \ac{DER} and \ac{NMSE}.
The \ac{DER} performance enhances steadily with increasing $T_p$, whereas the \ac{NMSE} performance saturates and approaches the \ac{MMSE} lower bound for $T_p>6$ due to the fixed coherence block length $T=60$.
Furthermore, the proposed \ac{EP}-based algorithms significantly outperform the linear \ac{MMSE} MIMO detector in terms of the \ac{SER}.
The proposed algorithms reach the same performance and even outperform the genie-aided linear \ac{MMSE} MIMO detector with perfect activity and channel knowledge for $T_p>6$.
A noticeable performance gap remains with respect to the genie-aided bilinear-\ac{EP} algorithm due to the challenging \ac{UE} activity detection under \ac{PC}.
Hence, we can conclude that the overall system performance is limited by the activity detection capability.
Furthermore, the \ac{JACD-EP-BG} algorithm outperforms the \ac{JACD-EP} algorithm for small $T_p$ which corresponds to strong \ac{PC}.

Next, we investigate the \ac{QoS} distribution within the network by evaluating the empirical \acp{CDF} of the \ac{DER}, \ac{NMSE}, and \ac{SER} in Fig.~\ref{fig:results_CDF}.
The three metrics are computed per \ac{UE}, i.e., $\text{DER}_k\coloneq\est{}{\ind{u_k\neq\hat{u}_k}}$, $\text{NMSE}_k\coloneq\frac{\est{}{||\lvec{g}_k-\hat{\lvec{g}}_k||^2}}{\est{}{||\lvec{g}||^2}}$ with $\hat{\lvec{g}}_k$ and $\lvec{g}_k$ being the $k^\text{th}$ column of $\hat{\lmat{G}}$ and $\lmat{G}$, respectively, and $\text{SER}_k\coloneq\est{}{\sum_t\ind{\data{x}_{kt}\neq\data{\hat{x}}_{kt}} / T_d\,|\,u_k=1}$.
The \acp{CDF} are obtained by considering 100 different realizations of the \ac{UE} positions, resulting in $100\cdot K=4000$ data points.
For each realization, the metrics are averaged over $10^3$ independent block transmissions, accounting for \ac{UE} activity, small-scale fading, and noise realizations.
Pilot sequences of length $T_p=6$ are used.
Fig.~\ref{fig:results_CDF} shows again the superior performance of the proposed algorithms.
Furthermore, it can be observed that the \ac{JACD-EP-BG} algorithm outperforms the \ac{JACD-EP} algorithm, especially for \acp{UE} experiencing good propagation conditions.

\vspace*{-2mm}
\section{Conclusion}\label{sec:concl}
\vspace*{-1mm}
In this paper, we considered the uplink of a \ac{GF-CF-MaMIMO} system and tackled the \ac{JACD} problem for massive \ac{MTC} under \ac{PC}.
We developed an \ac{EP}-based framework for distributed and scalable \ac{JACD} by appropriately factorizing the \ac{APP} distribution and applying \ac{EP} message passing on the associated factor graph.
Within this framework, we employed accurate categorical probability distributions for user activities and data, and Gaussian or \ac{BG} probability distributions for the channels, resulting in the \ac{JACD-EP} and the \ac{JACD-EP-BG} algorithms, respectively.
To support the latter, we developed an exponential family representation of the \ac{BG} distribution. The proposed framework is inherently scalable with respect to both the number of \acp{AP} and \acp{UE}. The message-passing structure enables straightforward extensions toward fully distributed implementations, while the computational complexity at each \ac{AP} scales linearly with the number of \acp{UE} in its coverage area.
The numerical results showed that the proposed algorithms exhibit strong robustness against \ac{PC} and outperform state-of-the-art algorithms in terms of \ac{DER}, \ac{NMSE}, and \ac{SER}.

\appendices
\vspace*{-2mm}
\section{Proof of the Bernoulli-Gaussian Product Lemma}\label{app:BG_product}
Using the exponential family representation of \ac{BG} distributions, $\BG{\lvec{x}|\lambda_i,\gvec{\mu}_i,\lmat{C}_i} = \er^{\gvec{\eta}_{\mathrm{BG},i}\herm\lvec{u}_\mathrm{BG}(\lvec{x}) - A_\mathrm{BG}(\gvec{\eta}_{\mathrm{BG},i})}$, $i=\{1,2\}$, the product of two \ac{BG} distributions is given by
\begin{align}
	&\BG{\lvec{x}|\lambda_1,\gvec{\mu}_1,\lmat{C}_1} \cdot \BG{\lvec{x}|\lambda_2,\gvec{\mu}_2,\lmat{C}_2}\nonumber\\
	&\quad= \er^{\left(\gvec{\eta}_{\text{BG},1}+\gvec{\eta}_{\text{BG},2}\right)\herm\lvec{u}_\text{BG}(\lvec{x}) - A_\text{BG}(\gvec{\eta}_{\text{BG},1}+\gvec{\eta}_{\text{BG},2})}\nonumber\\
    &\qquad\cdot \er^{A_\text{BG}(\gvec{\eta}_{\text{BG},1}+\gvec{\eta}_{\text{BG},2})-A_\text{BG}(\gvec{\eta}_{\text{BG},1})-A_\text{BG}(\gvec{\eta}_{\text{BG},2})}\nonumber\\
	&\quad= \BG{\lvec{x}|\lambda,\gvec{\mu},\lmat{C}} \cdot \er^{A_\text{BG}(\gvec{\eta}_{\text{BG},1}+\gvec{\eta}_{\text{BG},2})-A_\text{BG}(\gvec{\eta}_{\text{BG},1})-A_\text{BG}(\gvec{\eta}_{\text{BG},2})}.
	\label{eq:app_BG_product}
\end{align}
The first factor in~\eqref{eq:app_BG_product} is a normalized \ac{BG} distribution with vector of natural parameters $\gvec{\eta}_\text{BG}=\gvec{\eta}_{\text{BG},1}+\gvec{\eta}_{\text{BG},2}$, or equivalently, $\kappa=\kappa_1+\kappa_2$, $\gvec{\gamma}=\gvec{\gamma}_1+\gvec{\gamma}_2$, and $\gmat{\Lambda}=\gmat{\Lambda}_1+\gmat{\Lambda}_2$.
From $\gvec{\gamma}$ and $\gmat{\Lambda}$, the covariance matrix and mean of the Gaussian component in~\eqref{eq:BG_product_C} and~\eqref{eq:BG_product_mu}, respectively, immediately follows.
Recall that $A_\text{G}(\gvec{\eta}_\text{G})=\log\frac{1}{\er^{-A_\text{G}(\gvec{\eta}_\text{G})}}=\log\frac{1}{\CN{\lvec{0}|\gvec{\mu},\lmat{C}}}$.
Now, using the Gaussian product lemma and denoting $\gvec{\eta}_\text{G}=\gvec{\eta}_{\text{G},1}+\gvec{\eta}_{\text{G},2}$, the first natural parameter can be rewritten as
\begin{align}
	\kappa &= \kappa_1 + \kappa_2\nonumber\\
	&= \log\Big(\frac{(1-\lambda_1)(1-\lambda_2)}{\lambda_1\lambda_2}\!\cdot\!\frac{1}{\CN{\lvec{0}|\gvec{\mu}_1,\lmat{C}_1}\CN{\lvec{0}|\gvec{\mu}_2,\lmat{C}_2}}\Big)\nonumber\\
	&= \log\Big(\frac{(1-\lambda_1)(1-\lambda_2)}{\lambda_1\lambda_2}\nonumber\\
    &\qquad\qquad\cdot\frac{1}{\CN{\lvec{0}|\gvec{\mu},\lmat{C}}\CN{\lvec{0}|\gvec{\mu}_1-\gvec{\mu}_2,\lmat{C}_1+\lmat{C}_2}}\Big)\nonumber\\
    \begin{split}
	&= \log\Big(\frac{(1\!-\!\lambda_1)(1\!-\!\lambda_2)}{\lambda_1\lambda_2}\!\cdot\!\frac{1}{\CN{\lvec{0}|\gvec{\mu}_1\!-\!\gvec{\mu}_2,\lmat{C}_1\!+\!\lmat{C}_2}}\Big)\\
    &\quad+ A_\text{G}(\gvec{\eta}_\text{G}),
    \end{split}
    \label{eq:app_BG_product_kappa}
\end{align}
where $\lmat{C}$ and $\gvec{\mu}$ are given by~\eqref{eq:BG_product_C} and~\eqref{eq:BG_product_mu}, respectively.
Using the relation $\lambda=\frac{1}{1+\er^{\kappa-A_\text{G}(\gvec{\eta}_\text{G})}}$ together with~\eqref{eq:app_BG_product_kappa} yields~\eqref{eq:BG_product_lambda}.
Noting that $\er^{A_\text{BG}(\gvec{\eta}_\text{BG})}=\er^{A_\text{G}(\gvec{\eta}_\text{G})-\log\lambda}={\er^{A_\text{G}(\gvec{\eta}_\text{G})}\cdot\frac{1}{\lambda}}=\frac{1}{\lambda\cdot\CN{\lvec{0}|\gvec{\mu},\lmat{C}}}$ and applying the Gaussian product lemma and~\eqref{eq:BG_product_lambda}, it can be shown that the normalization constant for the \ac{BG} product given in~\eqref{eq:app_BG_product} is equal to the normalization constant stated in~\eqref{eq:BG_product}.

\vspace*{-2mm}
\section{Derivation of Message-Passing Update Rules}\label{app:MP_updates}
\vspace*{-1mm}
In the following, we briefly present the general message-passing update rules for \ac{EP} on graphs.
Next, we derive the message-passing rules for the \ac{JACD-EP-BG} algorithm presented in Section~\ref{subsec:MP_updates}.
For brevity, we focus on factor-to-variable messages.
Variable-to-factor messages can be easily derived by using the corresponding general \ac{EP} message-passing update rule in~\eqref{eq:EP_var_to_fac_message} and the fact that~\eqref{eq:EP_var_to_fac_message} is applied to exponential family distributions, which makes the computation of products straightforward.
Detailed derivations of the message-passing rules utilized in the \ac{JACD-EP} algorithm can be found in the extended version of~\cite{Forsch2024}.

\vspace*{-2mm}
\subsection{Expectation Propagation on Graphs}\label{subsec:EP_graph}
Consider a factor graph with factor nodes $\Psi_\alpha$ and variable nodes $\lvec{x}_\beta$.
Let $\lvec{x}_\alpha$ be the vector containing all variables connected to $\Psi_\alpha$.
Let $N_\beta$ denote the set of neighboring factor node indices of $\lvec{x}_\beta$, i.e., $N_\beta=\{\alpha\,|\,\lvec{x}_\beta\subseteq\lvec{x}_\alpha\}$.
The variable-to-factor message $\msg{\lvec{x}_\beta}{\Psi_\alpha}$ is obtained by computing the parameters of the following distribution,
\begin{equation}
    \msgp{\lvec{x}_\beta}{\Psi_\alpha}(\lvec{x}_\beta) \propto \prod_{\alpha'\in N_\beta\setminus\alpha}\msgp{\Psi_{\alpha'}}{\lvec{x}_\beta}(\lvec{x}_\beta).
	\label{eq:EP_var_to_fac_message}
\end{equation}
The factor-to-variable message $\msg{\Psi_\alpha}{\lvec{x}_\beta}$ is computed by determining the parameters of
\vspace*{-2mm}
\begin{equation}
	\msgp{\Psi_\alpha}{\lvec{x}_\beta}(\lvec{x}_\beta) \propto \frac{\text{proj}\left\{\mmd{\Psi_{\alpha}}{\lvec{x}_{\beta}}(\lvec{x}_\beta)\right\}}{\msgp{\lvec{x}_\beta}{\Psi_\alpha}(\lvec{x}_\beta)},
	\label{eq:EP_fac_to_var_message}
\end{equation}
where the distribution $\mmd{\Psi_{\alpha}}{\lvec{x}_{\beta}}(\lvec{x}_\beta)$ is given by
\begin{equation}
	\mmd{\Psi_{\alpha}}{\lvec{x}_{\beta}}(\lvec{x}_\beta) = \frac{1}{{Z}_{\Psi_\alpha}}\int\Psi_\alpha(\lvec{x}_\alpha)\prod_{\beta'\in N_\alpha}\msgp{\lvec{x}_{\beta'}}{\Psi_\alpha}(\lvec{x}_{\beta'})\,\mathrm{d}\lvec{x}_\alpha\!\setminus\!\lvec{x}_\beta.
	\label{eq:message_projection}
\end{equation}
The distribution $\mmd{\Psi_{\alpha}}{\lvec{x}_{\beta}}(\lvec{x}_\beta)$ is also referred to as the \emph{local belief of $\lvec{x}_{\beta}$ at the factor node $\Psi_{\alpha}$}.
Here, ${Z}_{\Psi_\alpha}$ is a normalization constant and $N_\alpha$ denotes the set of neighboring variable node indices of $\Psi_\alpha$, i.e, $N_\alpha=\{\beta\,|\,\lvec{x}_\beta\subseteq\lvec{x}_\alpha\}$.
Furthermore, $\text{proj}\{\cdot\}$ denotes the projection onto the chosen exponential family defined as
\begin{equation}
	\text{proj}\{f(\lvec{x})\} = \argmin_{g(\lvec{x})\in\mathcal{F}}D_{KL}(f(\lvec{x})||g(\lvec{x})),
	\label{eq:projection}
\end{equation}
where $D_{KL}(f(\lvec{x})||g(\lvec{x}))$ is the \ac{KL} divergence between $f$ and $g$ and $\mathcal{F}$ is the exponential family with sufficient statistics $\lvec{u}(\lvec{x})$.
This minimization is performed by \emph{moment matching},
\vspace*{-1mm}
\begin{align}
    \est{g(\lvec{x})}{\lvec{u}(\lvec{x})} = \est{f(\lvec{x})}{\lvec{u}(\lvec{x})}.
    \label{eq:moment_matching}
\end{align}
Once message passing converges, the approximate posterior distribution $\hat{p}_{\lvec{x}_\beta}(\lvec{x}_\beta)$ of the variable $\lvec{x}_\beta$ can be computed via
\vspace*{-1mm}
\begin{align}
	\hat{p}_{\lvec{x}_\beta}(\lvec{x}_\beta) &\propto \prod_{\alpha\in N_\beta} \msgp{\Psi_\alpha}{\lvec{x}_\beta}(\lvec{x}_\beta).
	\label{eq:approx_factor_beta_factorization}
\end{align}

\vspace*{-6mm}
\subsection{Message Update for $\msg{\Psi_{y_{l,t}}}{\lvec{z}_{l,kt}}$}\label{subsec:m_Psi_y_z}
This message update is identical for the \ac{JACD-EP-BG} and the \ac{JACD-EP} algorithm, and, hence, its derivation can be found in the extended version of~\cite{Forsch2024}.

\vspace*{-2mm}
\subsection{Message Update for $\msg{\Psi_{z_{l,kt}}}{x_{kt}}$}\label{subsec:m_Psi_z_x}
The local belief of $x_{kt}$ at the factor node $\Psi_{z_{l,kt}}$ is given by
\begin{align}
	&\mmd{\Psi_{z_{l,kt}}}{x_{kt}}(x_{kt})\nonumber\\
    &\quad\propto \int\!\!\int\delta(\lvec{z}_{l,kt}-\lvec{g}_{l,k}x_{kt})\cdot\msgp{x_{kt}}{\Psi_{z_{l,kt}}}(x_{kt})\nonumber\\
	&\qquad\quad\cdot\msgp{\lvec{z}_{l,kt}}{\Psi_{z_{l,kt}}}(\lvec{z}_{l,kt})\cdot\msgp{\lvec{g}_{l,k}}{\Psi_{z_{l,kt}}}(\lvec{g}_{l,k})\,d\lvec{z}_{l,kt}\,d\lvec{g}_{l,k}\nonumber\\
	&\quad\overset{(a)}{=} \msgp{x_{kt}}{\Psi_{z_{l,kt}}}(x_{kt})\cdot\!\!\int\msgp{\lvec{z}_{l,kt}}{\Psi_{z_{l,kt}}}(\lvec{g}_{l,k}x_{kt})\nonumber\\
    &\qquad\cdot\msgp{\lvec{g}_{l,k}}{\Psi_{z_{l,kt}}}(\lvec{g}_{l,k})\,d\lvec{g}_{l,k}\nonumber\\
	&\quad\overset{(b)}{\approx} \msgp{x_{kt}}{\Psi_{z_{l,kt}}}(x_{kt})\cdot\!\!\int\mathcal{CN}\big(\lvec{g}_{l,k}x_{kt}|\Mumsg{\Psi_{y_{l,t}}}{\lvec{z}_{l,kt}},\Cmsg{\Psi_{y_{l,t}}}{\lvec{z}_{l,kt}}\big)\nonumber\\
	&\qquad\cdot\Actmsg{\lvec{g}_{l,k}}{\Psi_{z_{l,kt}}}\cdot\mathcal{CN}\big(\lvec{g}_{l,k}|\Mumsg{\lvec{g}_{l,k}}{\Psi_{z_{l,kt}}},\Cmsg{\lvec{g}_{l,k}}{\Psi_{z_{l,kt}}}\big)\,d\lvec{g}_{l,k}\nonumber\\
	&\quad\overset{(c)}{=} \msgp{x_{kt}}{\Psi_{z_{l,kt}}}(x_{kt})\cdot\!\!\int|x_{kt}|^{-2N}\cdot\Actmsg{\lvec{g}_{l,k}}{\Psi_{z_{l,kt}}}\nonumber\\
	&\qquad\cdot\CN{\lvec{g}_{l,k}|\gvec{\mu}_\text{tmp},\lmat{C}_\text{tmp}}\cdot\mathcal{CN}\big(\lvec{0}|\Mumsg{\Psi_{y_{l,t}}}{\lvec{z}_{l,kt}}x_{kt}^{-1}\nonumber\\
    &\qquad-\Mumsg{\lvec{g}_{l,k}}{\Psi_{z_{l,kt}}},    \Cmsg{\Psi_{y_{l,t}}}{\lvec{z}_{l,kt}}|x_{kt}|^{-2}+\Cmsg{\lvec{g}_{l,k}}{\Psi_{z_{l,kt}}}\big)\,d\lvec{g}_{l,k}\nonumber\\
	&\quad= \msgp{x_{kt}}{\Psi_{z_{l,kt}}}(x_{kt})\cdot|x_{kt}|^{-2N}\cdot\Actmsg{\lvec{g}_{l,k}}{\Psi_{z_{l,kt}}}\nonumber\\
	&\qquad\cdot\mathcal{CN}\big(\lvec{0}|\Mumsg{\Psi_{y_{l,t}}}{\lvec{z}_{l,kt}}x_{kt}^{-1}-\Mumsg{\lvec{g}_{l,k}}{\Psi_{z_{l,kt}}},\nonumber\\
    &\qquad\qquad\quad\;\;\Cmsg{\Psi_{y_{l,t}}}{\lvec{z}_{l,kt}}|x_{kt}|^{-2}+\Cmsg{\lvec{g}_{l,k}}{\Psi_{z_{l,kt}}}\big)\nonumber\\
	&\quad= \msgp{x_{kt}}{\Psi_{z_{l,kt}}}(x_{kt})\cdot\Actmsg{\lvec{g}_{l,k}}{\Psi_{z_{l,kt}}}\cdot\theta(x_{kt}),
	\label{eq:app_mmd_Psi_z_x}
\end{align}
where $(a)$ is obtained by the sifting property of the Dirac delta function~\cite{Candan2021}, $(b)$ is obtained by using $\msgp{\lvec{z}_{l,kt}}{\Psi_{z_{l,kt}}}(\lvec{z}_{l,kt}) = \msgp{\Psi_{y_{l,t}}}{\lvec{z}_{l,kt}}(\lvec{z}_{l,kt})$ and considering only the Gaussian part of the \ac{BG} distribution $\msgp{\lvec{g}_{l,k}}{\Psi_{z_{l,kt}}}$ which models the active component that is necessary to detect the transmitted symbol $x_{kt}$, and $(c)$ is obtained by utilizing the Gaussian scaling\footnote{Gaussian scaling lemma~\cite{Papoulis2002}: $\CN{\lvec{y}\,|\,c\,\gvec{\mu},|c|^2\,\lmat{C}} = |c|^{-2N}\cdot\CN{c^{-1}\,\lvec{y}|\gvec{\mu},\lmat{C}}$} and product lemma which yields $\gvec{\mu}_\text{tmp}$ and $\lmat{C}_\text{tmp}$ accordingly.
The final result in~\eqref{eq:app_mmd_Psi_z_x} is obtained by applying the Gaussian scaling lemma once again with $\theta(x_{kt})$ given in~\eqref{eq:theta_mmd_Psi_z_z}.
For $t>T_p$ and $x_{kt}\in\mathcal{X}$,~\eqref{eq:app_mmd_Psi_z_x} is a categorical distribution.
Hence, the projection operation in~\eqref{eq:EP_fac_to_var_message} is superfluous since it projects $\mmd{\Psi_{z_{l,kt}}}{x_{kt}}(x_{kt})$ onto a categorical distribution, and the denominator of~\eqref{eq:EP_fac_to_var_message} cancels with the first term in~\eqref{eq:app_mmd_Psi_z_x}.
Thus, the final message update rule reduces to~\eqref{eq:m_Psi_z_x}.

\vspace*{-2mm}
\subsection{Message Update for $\msg{\Psi_{z_{l,kt}}}{\lvec{g}_{l,k}}$}\label{subsec:m_Psi_z_g}
The local belief of $\lvec{g}_{l,k}$ at the factor node $\Psi_{z_{l,kt}}$ is given by
\vspace*{-5mm}
\begin{align}
    &\mmd{\Psi_{z_{l,kt}}}{\lvec{g}_{l,k}}(\lvec{g}_{l,k})\nonumber\\
    &\quad\propto \sum_{x_{kt}}\int\delta(\lvec{z}_{l,kt}-\lvec{g}_{l,k}x_{kt})\cdot\msgp{x_{kt}}{\Psi_{z_{l,kt}}}(x_{kt})\nonumber\\
    &\qquad\cdot\msgp{\lvec{z}_{l,kt}}{\Psi_{z_{l,kt}}}(\lvec{z}_{l,kt})\cdot\msgp{\lvec{g}_{l,k}}{\Psi_{z_{l,kt}}}(\lvec{g}_{l,k})\,d\lvec{z}_{l,kt}\nonumber\\
    &\quad= \sum_{x_{kt}}\msgp{x_{kt}}{\Psi_{z_{l,kt}}}\!(x_{kt})\!\cdot\!\msgp{\Psi_{y_{l,t}}}{\lvec{z}_{l,kt}}\!(\lvec{g}_{l,k}x_{kt})\!\cdot\!\msgp{\lvec{g}_{l,k}}{\Psi_{z_{l,kt}}}\!(\lvec{g}_{l,k}),
    \label{eq:app_mmd_Psi_z_g_0}
\end{align}
where~\eqref{eq:app_mmd_Psi_z_g_0} is obtained by the sifting property of the Dirac delta function and using $\msgp{\lvec{z}_{l,kt}}{\Psi_{z_{l,kt}}}(\lvec{z}_{l,kt}) = \msgp{\Psi_{y_{l,t}}}{\lvec{z}_{l,kt}}(\lvec{z}_{l,kt})$.
The local \ac{BG} belief is further computed by considering the two parts modeling activity and inactivity of the \ac{BG} distribution separately.
For the \ac{BG} inactive component, only the Dirac at $\lvec{g}_{l,k}=\lvec{0}$ contributes, i.e., $\msgp{\lvec{g}_{l,k}}{\Psi_{z_{l,kt}}}(\lvec{g}_{l,k}=\lvec{0})={1-\Actmsg{\lvec{g}_{l,k}}{\Psi_{z_{l,kt}}}}$.
Thus,
\begin{align}
    &\mmd{\Psi_{z_{l,kt}}}{\lvec{g}_{l,k}}(\lvec{g}_{l,k}=\lvec{0})\nonumber\\
    &\quad\propto \sum_{x_{kt}}\msgp{x_{kt}}{\Psi_{z_{l,kt}}}(x_{kt})\cdot\theta(0)\cdot (1-\Actmsg{\lvec{g}_{l,k}}{\Psi_{z_{l,kt}}})\nonumber\\
    &\quad= \theta(0)\cdot (1-\Actmsg{\lvec{g}_{l,k}}{\Psi_{z_{l,kt}}}),
    \label{eq:app_mmd_Psi_z_g_inact}
\end{align}
with $\msgp{\Psi_{y_{l,t}}}{\lvec{z}_{l,kt}}(\lvec{0})=\theta(0)$ and $\theta(x)$ defined in~\eqref{eq:theta_mmd_Psi_z_z}.
For the \ac{BG} active component, only the Gaussian part is considered, i.e., $\msgp{\lvec{g}_{l,k}}{\Psi_{z_{l,kt}}}(\lvec{g}_{l,k})=\Actmsg{\lvec{g}_{l,k}}{\Psi_{z_{l,kt}}}\cdot\mathcal{CN}\big(\lvec{g}_{l,k}|\Mumsg{\lvec{g}_{l,k}}{\Psi_{z_{l,kt}}},\Cmsg{\lvec{g}_{l,k}}{\Psi_{z_{l,kt}}}\big)$,
\begin{align}
    &\mmd{\Psi_{z_{l,kt}}}{\lvec{g}_{l,k}}(\lvec{g}_{l,k}\neq\lvec{0})\nonumber\\
    &\quad\propto \sum_{x_{kt}}\msgp{x_{kt}}{\Psi_{z_{l,kt}}}\!(x_{kt})\!\cdot\!\mathcal{CN}\big(\lvec{g}_{l,k}x_{kt}|\Mumsg{\Psi_{y_{l,t}}}{\lvec{z}_{l,kt}}\!,\Cmsg{\Psi_{y_{l,t}}}{\lvec{z}_{l,kt}}\big)\nonumber\\
    &\qquad\cdot\Actmsg{\lvec{g}_{l,k}}{\Psi_{z_{l,kt}}}\cdot\mathcal{CN}\big(\lvec{g}_{l,k}|\Mumsg{\lvec{g}_{l,k}}{\Psi_{z_{l,kt}}},\Cmsg{\lvec{g}_{l,k}}{\Psi_{z_{l,kt}}}\big)\nonumber\\
    \begin{split}
    &\quad= \sum_{x_{kt}}\Actmsg{\lvec{g}_{l,k}}{\Psi_{z_{l,kt}}}\cdot\phi(x_{kt})\\
    &\qquad\cdot\mathcal{CN}\big(\lvec{g}_{l,k}|\Mumsga{\lvec{z}_{l,kt}}{}\!(x_{kt})x_{kt}^{-1},\Cmsga{\lvec{z}_{l,kt}}{}\!(x_{kt})|x_{kt}|^{-2}\big),
	\label{eq:app_mmd_Psi_z_g_act}
    \end{split}
\end{align}
which is obtained by applying the Gaussian scaling and product lemma with  $\phi(x_{kt})$, $\Mumsga{\lvec{z}_{l,kt}}{}(x_{kt})$, and $\Cmsga{\lvec{z}_{l,kt}}{}(x_{kt})$ defined in~\eqref{eq:phi_mmd_Psi_z},~\eqref{eq:mu_tmp_mmd_Psi_z_z}, and~\eqref{eq:C_tmp_mmd_Psi_z_z}, respectively.
Thus, the active component of $\mmd{\Psi_{z_{l,kt}}}{\lvec{g}_{l,k}}(\lvec{g}_{l,k})$ in~\eqref{eq:app_mmd_Psi_z_g_act} is a Gaussian mixture whose components correspond to the transmit symbols $x_{kt}$.
The computation of the expectation of the sufficient statistics of this mixture distribution for moment matching may yield a mean close to zero, which would indicate activity with a weak channel.
To prevent this issue which determines a high number of false alarms, we select the mixture component with the largest weight, denoted by $x^*$ in~\eqref{eq:x_star_mmd_Psi_z_g}.
Note that in the pilot phase for $t<T_p$, the transmitted symbol $x_{kt}=\pilot{x}_{kt}$ is known a priori and, hence, $x^*=\pilot{x}_{kt}$.
With this pragmatic approximation, we obtain
\begin{equation}
    \mmd{\Psi_{z_{l,kt}}}{\lvec{g}_{l,k}}(\lvec{g}_{l,k}) \approx \mathcal{BG}\big(\lvec{g}_{l,k}|\Actmsgb{\lvec{g}_{l,kt}}{},\Mumsgb{\lvec{g}_{l,kt}}{},\Cmsgb{\lvec{g}_{l,kt}}{}\big) 
    \label{eq:app_m_Psi_z_g}
\end{equation}
with parameters $\Actmsgb{\lvec{g}_{l,kt}}{}$, $\Mumsgb{\lvec{g}_{l,kt}}{}$, and $\Cmsgb{\lvec{g}_{l,kt}}{}$ given by~\eqref{eq:act_mmd_Psi_z_g},~\eqref{eq:gamma_mmd_Psi_z_g}, and~\eqref{eq:Lambda_mmd_Psi_z_g}, respectively.
By applying the \ac{EP} update rule~\eqref{eq:EP_fac_to_var_message}, the final message is defined by the parameters in~\eqref{eq:kappa_Psi_z_g},~\eqref{eq:mu_Psi_z_g}, and~\eqref{eq:C_Psi_z_g}.

\vspace*{-2mm}
\subsection{Message Update for $\msg{\Psi_{g_{l,k}}}{u_k}$}\label{subsec:MP_Psi_g_u}
The local belief of $u_k$ at the factor node $\Psi_{g_{l,k}}$ is given by
\begin{align}
	&\mmd{\Psi_{g_{l,k}}}{u_k}(u_k)\nonumber\\
    &\quad\propto \int\!\!\int\delta(\lvec{g}_{l,k}-\lvec{h}_{l,k}u_k)\cdot\msgp{u_k}{\Psi_{g_{l,k}}}(u_k)\cdot\msgp{\lvec{g}_{l,k}}{\Psi_{g_{l,k}}}(\lvec{g}_{l,k})\nonumber\\
    &\qquad\cdot\msgp{\lvec{h}_{l,k}}{\Psi_{g_{l,k}}}(\lvec{h}_{l,k})\,d\lvec{g}_{l,k}\,d\lvec{h}_{l,k}\nonumber\\
    \begin{split}
	&\quad= \int\msgp{u_k}{\Psi_{g_{l,k}}}(u_k)\cdot\msgp{\lvec{g}_{l,k}}{\Psi_{g_{l,k}}}(\lvec{h}_{l,k}u_k)\\
    &\qquad\cdot\msgp{\Psi_{h_{l,k}}}{\lvec{h}_{l,k}}(\lvec{h}_{l,k})\,d\lvec{h}_{l,k}
    \end{split}
	\label{eq:app_mmd_Psi_g_u_0}
\end{align}
which is obtained by the sifting property of the Dirac delta function and using $\msgp{\lvec{h}_{l,k}}{\Psi_{g_{l,k}}}(\lvec{h}_{l,k})=\msgp{\Psi_{h_{l,k}}}{\lvec{h}_{l,k}}(\lvec{h}_{l,k})$.
The local categorical belief is further computed by treating separately the two components modeling the Bernoulli distribution.
For the inactive case, i.e., $u_k=0$, only the Dirac component of the \ac{BG} distribution $\msgp{\lvec{g}_{l,k}}{\Psi_{g_{l,k}}}(\lvec{g}_{l,k})$ contributes, i.e., $\msgp{\lvec{g}_{l,k}}{\Psi_{g_{l,k}}}(\lvec{h}_{l,k}u_k=\lvec{0})=1-\Actmsg{\lvec{g}_{l,k}}{\Psi_{g_{l,k}}}$
\begin{align}
    &\mmd{\Psi_{g_{l,k}}}{u_k}(u_k=0)\nonumber\\
    &\quad\propto \int\msgp{u_k}{\Psi_{g_{l,k}}}(0)\cdot(1-\Actmsg{\lvec{g}_{l,k}}{\Psi_{g_{l,k}}})\cdot\msgp{\Psi_{h_{l,k}}}{\lvec{h}_{l,k}}(\lvec{h}_{l,k})\,d\lvec{h}_{l,k}\nonumber\\
    &\quad= \msgp{u_k}{\Psi_{g_{l,k}}}(0)\cdot(1-\Actmsg{\lvec{g}_{l,k}}{\Psi_{g_{l,k}}}).
    \label{eq:app_mmd_Psi_g_u_inact}
\end{align}
For the active case, i.e., $u_k=1$, only the Gaussian component of the \ac{BG} distribution $\msgp{\lvec{g}_{l,k}}{\Psi_{g_{l,k}}}(\lvec{g}_{l,k})$ is relevant, i.e., $\msgp{\lvec{g}_{l,k}}{\Psi_{g_{l,k}}}(\lvec{g}_{l,k})=\msgp{\lvec{g}_{l,k}}{\Psi_{g_{l,k}}}(\lvec{h}_{l,k})=\Actmsg{\lvec{g}_{l,k}}{\Psi_{g_{l,k}}}\cdot\mathcal{CN}\big(\lvec{h}_{l,k}|\Mumsg{\lvec{g}_{l,k}}{\Psi_{g_{l,k}}},\Cmsg{\lvec{g}_{l,k}}{\Psi_{g_{l,k}}}\big)$,
\begin{align}
    &\mmd{\Psi_{g_{l,k}}}{u_k}(u_k=1)\\
    &\quad\propto \int\msgp{u_k}{\Psi_{g_{l,k}}}(1)\cdot\Actmsg{\lvec{g}_{l,k}}{\Psi_{g_{l,k}}}\nonumber\\
    &\qquad\cdot\mathcal{CN}\big(\lvec{h}_{l,k}|\Mumsg{\lvec{g}_{l,k}}{\Psi_{g_{l,k}}}\!,\Cmsg{\lvec{g}_{l,k}}{\Psi_{g_{l,k}}}\!\big)\!\cdot\!\msgp{\Psi_{h_{l,k}}}{\lvec{h}_{l,k}}(\lvec{h}_{l,k})\,d\lvec{h}_{l,k}\nonumber\\
    &\quad= \msgp{u_k}{\Psi_{g_{l,k}}}(1)\cdot\Actmsg{\lvec{g}_{l,k}}{\Psi_{g_{l,k}}}\cdot\vartheta(1),
    \label{eq:app_mmd_Psi_g_u_act}
\end{align}
which is obtained by applying the Gaussian product lemma with $\vartheta(u_k)$ defined in~\eqref{eq:theta_mmd_Psi_g_u}.
Then, combining the active and inactive components, we obtain
\begin{equation}
	\mmd{\Psi_{g_{l,k}}}{u_k}(u_k) \!\propto\! \msgp{u_k}{\Psi_{g_{l,k}}}(u_k)\cdot\begin{cases}
	1-\Actmsg{\lvec{g}_{l,k}}{\Psi_{g_{l,k}}} & \!\!\text{for }u_k=0\\
	\Actmsg{\lvec{g}_{l,k}}{\Psi_{g_{l,k}}}\cdot\vartheta(1) & \!\!\text{for }u_k=1
	\end{cases}.
	\label{eq:app_mmd_Psi_g_u}
\end{equation}
The projection in~\eqref{eq:EP_fac_to_var_message} is superfluous since $\mmd{\Psi_{g_{l,k}}}{u_k}(u_k)$ is already a categorical distribution.
Hence, the denominator of~\eqref{eq:EP_fac_to_var_message} cancels with the first term in~\eqref{eq:app_mmd_Psi_g_u}.
Thus, the final message update rule is given by~\eqref{eq:m_Psi_g_u}.

\vspace*{-2mm}
\subsection{Message Update for $\msg{\Psi_{g_{l,k}}}{\lvec{g}_{l,k}}$}\label{subsec:MP_Psi_g_g}
The local belief of $\lvec{g}_{l,k}$ at the factor node $\Psi_{g_{l,k}}$ is given by
\begin{align}
	&\mmd{\Psi_{g_{l,k}}}{\lvec{g}_{l,k}}(\lvec{g}_{l,k})\nonumber\\
    &\quad\propto \sum_{u_k}\int\delta(\lvec{g}_{l,k}-\lvec{h}_{l,k}u_k)\cdot\msgp{u_k}{\Psi_{g_{l,k}}}(u_k)\cdot\msgp{\lvec{g}_{l,k}}{\Psi_{g_{l,k}}}(\lvec{g}_{l,k})\nonumber\\
    &\qquad\cdot\msgp{\lvec{h}_{l,k}}{\Psi_{g_{l,k}}}(\lvec{h}_{l,k})\,d\lvec{h}_{l,k}\nonumber\\
	&\quad\overset{(a)}{=} \msgp{\lvec{g}_{l,k}}{\Psi_{g_{l,k}}}(\lvec{g}_{l,k})\cdot\big(\msgp{u_k}{\Psi_{g_{l,k}}}(0)\cdot\delta(\lvec{g}_{l,k})\nonumber\\
    &\qquad+\msgp{u_k}{\Psi_{g_{l,k}}}(1)\cdot\msgp{\Psi_{h_{l,k}}}{\lvec{h}_{l,k}}(\lvec{g}_{l,k})\big)\nonumber\\
\begin{split}
	&\quad= \msgp{\lvec{g}_{l,k}}{\Psi_{g_{l,k}}}(\lvec{g}_{l,k})\cdot\mathcal{BG}\big(\lvec{g}_{l,k}|\msgp{u_k}{\Psi_{g_{l,k}}}(1),\tilde{\gvec{\mu}}_{h_{l,k}},\tilde{\lmat{C}}_{h_{l,k}}\big)
	\label{eq:app_mmd_Psi_g_g}
\end{split}
\end{align}
where $(a)$ is obtained by explicitly evaluating the sum over $u_k$ and utilizing the sifting property of the Dirac delta function along with the identity $\msgp{\lvec{h}_{l,k}}{\Psi_{g_{l,k}}}(\lvec{h}_{l,k})=\msgp{\Psi_{h_{l,k}}}{\lvec{h}_{l,k}}(\lvec{h}_{l,k})$.
The final equation~\eqref{eq:app_mmd_Psi_g_g} is obtained by using $\msgp{\Psi_{h_{l,k}}}{\lvec{h}_{l,k}}(\lvec{h}_{l,k})=\mathcal{CN}\big(\lvec{h}_{l,k}|\tilde{\gvec{\mu}}_{h_{l,k}},\tilde{\lmat{C}}_{h_{l,k}}\big)$ and noting that the second factor in $(a)$ written in brackets is a \ac{BG} distribution.
The projection in~\eqref{eq:EP_fac_to_var_message} is superfluous since $\mmd{\Psi_{g_{l,k}}}{\lvec{g}_{l,k}}(\lvec{g}_{l,k})$ is the product of two \ac{BG} distributions which is again a \ac{BG} distribution.
Hence, the denominator of~\eqref{eq:EP_fac_to_var_message} cancels with the first term in~\eqref{eq:app_mmd_Psi_g_g}.
Thus, the final message parameters are given by~\eqref{eq:act_Psi_g_g},~\eqref{eq:mu_Psi_g_g}, and~\eqref{eq:C_Psi_g_g}.

\vspace*{-2mm}
\subsection{Message Update for $\msg{\Psi_{z_{l,kt}}}{\lvec{z}_{l,kt}}$}\label{subsec:m_Psi_z_z}
The local belief of $\lvec{z}_{l,kt}$ at the factor node $\Psi_{z_{l,kt}}$ is given by
\begin{align}
	&\mmd{\Psi_{z_{l,kt}}}{\lvec{z}_{l,kt}}(\lvec{z}_{l,kt})\nonumber\\
	&\quad\propto \sum_{x_{kt}}\int\delta(\lvec{z}_{l,kt}-\lvec{g}_{l,k}x_{kt})\cdot\msgp{x_{kt}}{\Psi_{z_{l,kt}}}(x_{kt})\nonumber\\
    &\qquad\cdot\msgp{\lvec{z}_{l,kt}}{\Psi_{z_{l,kt}}}(\lvec{z}_{l,kt})\nonumber\cdot\msgp{\lvec{g}_{l,k}}{\Psi_{z_{l,kt}}}(\lvec{g}_{l,k})\,d\lvec{g}_{l,k}\nonumber\\
	&\quad\overset{(a)}{=} \sum_{x_{kt}}|x_{kt}|^{-2N}\cdot\msgp{x_{kt}}{\Psi_{z_{l,kt}}}(x_{kt})\cdot\msgp{\Psi_{y_{l,t}}}{\lvec{z}_{l,kt}}(\lvec{z}_{l,kt})\nonumber\\
    &\qquad\cdot\msgp{\lvec{g}_{l,k}}{\Psi_{z_{l,kt}}}\Big(\frac{\lvec{z}_{l,kt}}{x_{kt}}\Big)\nonumber\\
	&\quad\overset{(b)}{\approx} \sum_{x_{kt}}|x_{kt}|^{-2N}\cdot\msgp{x_{kt}}{\Psi_{z_{l,kt}}}(x_{kt})\cdot\msgp{\Psi_{y_{l,t}}}{\lvec{z}_{l,kt}}(\lvec{z}_{l,kt})\nonumber\\
    &\qquad\cdot\Actmsg{\lvec{g}_{l,k}}{\Psi_{z_{l,kt}}}\cdot\mathcal{CN}\Big(\frac{\lvec{z}_{l,kt}}{x_{kt}}\Big|\mumsg{\lvec{g}_{l,k}}{\Psi_{z_{l,kt}}},\Cmsg{\lvec{g}_{l,k}}{\Psi_{z_{l,kt}}}\Big)\nonumber\\
	&\quad= \sum_{x_{kt}}\catmsg{x_{kt}}{\Psi_{z_{l,kt}}}(x_{kt})\cdot\theta(x_{kt})\cdot\Actmsg{\lvec{g}_{l,k}}{\Psi_{z_{l,kt}}}\nonumber\\
    &\qquad\cdot\mathcal{CN}\big(\lvec{z}_{l,kt}|\Mumsga{\lvec{z}_{l,kt}}{}(x_{kt}),\Cmsga{\lvec{z}_{l,kt}}{}(x_{kt})\big),
	\label{eq:app_mmd_Psi_z_z}
\end{align}
where $(a)$ is obtained by applying the scaling and sifting property of the Dirac delta function~\cite{Candan2021} and using $\msgp{\lvec{z}_{l,kt}}{\Psi_{z_{l,kt}}}(\lvec{z}_{l,kt}) = \msgp{\Psi_{y_{l,t}}}{\lvec{z}_{l,kt}}(\lvec{z}_{l,kt})$, and $(b)$ is obtained by approximating the \ac{BG} distribution $\msgp{\lvec{g}_{l,k}}{\Psi_{z_{l,kt}}}$ by its Gaussian part.
The final result in~\eqref{eq:app_mmd_Psi_z_z} is obtained by applying the Gaussian scaling and product lemma with $\theta(x_{kt})$, $\Mumsga{\lvec{z}_{l,kt}}{}(x_{kt})$, and $\Cmsga{\lvec{z}_{l,kt}}{}(x_{kt})$ given by~\eqref{eq:theta_mmd_Psi_z_z},~\eqref{eq:mu_tmp_mmd_Psi_z_z}, and~\eqref{eq:C_tmp_mmd_Psi_z_z}, respectively.
The normalization constant for $\mmd{\Psi_{z_{l,kt}}}{\lvec{z}_{l,kt}}(\lvec{z}_{l,kt})$ is given by $\tilde{Z}_{\Psi_{z_{l,kt}}}\!=\Actmsg{\lvec{g}_{l,k}}{\Psi_{z_{l,kt}}}\!\!\cdot Z_{\Psi_{z_{l,kt}}}$ where $Z_{\Psi_{z_{l,kt}}}$ is defined in~\eqref{eq:Z_Psi_z}.
According to~\eqref{eq:app_mmd_Psi_z_z}, $\mmd{\Psi_{z_{l,kt}}}{\lvec{z}_{l,kt}}(\lvec{z}_{l,kt})$ is a Gaussian mixture with mean vector and covariance matrix given by~\eqref{eq:mu_mmd_Psi_z_z} and~\eqref{eq:C_mmd_Psi_z_z}, respectively.
Note that in the pilot phase for $t<T_p$, the transmitted symbol $x_{kt}=\pilot{x}_{kt}$ is known a priori.
Hence, $\mmd{\Psi_{z_{l,kt}}}{\lvec{z}_{l,kt}}(\lvec{z}_{l,kt})$ reduces to a Gaussian distribution for $t<T_p$ with mean vector $\Mumsgb{\lvec{z}_{l,kt}}{}=\Mumsga{\lvec{z}_{l,kt}}{}(\pilot{x}_{kt})$ and covariance matrix $\Cmsgb{\lvec{z}_{l,kt}}{}=\Cmsga{\lvec{z}_{l,kt}}{}(\pilot{x}_{kt})$.
The final message parameters are given by~\eqref{eq:mu_Psi_z_z} and~\eqref{eq:C_Psi_z_z}, which are computed according to~\eqref{eq:EP_fac_to_var_message}.

\linespread{0.88}
\vspace*{-2mm}
\bibliographystyle{IEEEtran}
\bibliography{IEEEabrv,references_abbrev}

\end{document}